% May 24, 2010
%\input{tcilatex}

\documentclass[11pt]{article}
\usepackage{amssymb}

%%%%%%%%%%%%%%%%%%%%%%%%%%%%%%%%%%%%%%%%%%%%%%%%%%%%%%%%%%%%%%%%%%%%%%%%%%%%%%%%%%%%%%%%%%%%%%%%%%%%
\usepackage{graphicx}
\usepackage{amsmath}
\usepackage{makeidx}
\usepackage{indentfirst}

%TCIDATA{OutputFilter=Latex.dll}
%TCIDATA{LastRevised=Sunday, October 10, 2010 17:53:47}
%TCIDATA{<META NAME="GraphicsSave" CONTENT="32">}

\newcounter{resultnum}[section]\setcounter{resultnum}{0}

\newcounter{conclusionnum}[section]\setcounter{conclusionnum}{0}

\newcounter{conditionnum}[section]\setcounter{conditionnum}{0}

\newcounter{conjecturenum}[section]\setcounter{conjecturenum}{0}
\newtheorem{example}{Example}[section]

\newcounter{examplenum}[section]\setcounter{examplenum}{0}

\newcounter{exercisenum}[section]\setcounter{exercisenum}{0}

\newcounter{lemmanum}[section]\setcounter{lemmanum}{0}
\newtheorem{notation}{Notation}[section]

\newcounter{notationnum}[section]\setcounter{notationnum}{0}
\newtheorem{theorem}{Theorem}[section]

\newcounter{theoremnum}[section]\setcounter{theoremnum}{0}
\newtheorem{definition}{Definition}[section]

\newcounter{definitionnum}[section]\setcounter{definitionnum}{0}
\newtheorem{hypothesis}{Hypothesis}[section]

\newcounter{hypothesisnum}[section]\setcounter{hypothesisnum}{0}

\newcounter{corollarynum}[section]\setcounter{corollarynum}{0}

\newcounter{remarknum}[section]\setcounter{remarknum}{0}

\newcounter{propositionnum}[section]\setcounter{propositionnum}{0}

\newcounter{acknowledgementnum}[section]\setcounter{acknowledgementnum}{0}

\newcounter{algorithmnum}[section]\setcounter{algorithmnum}{0}

\newcounter{axiomnum}[section]\setcounter{axiomnum}{0}

\newcounter{casenum}[section]\setcounter{casenum}{0}
\newtheorem{claim}{Claim}[section]

\newcounter{claimnum}[section]\setcounter{claimnum}{0}

\newcounter{summarynum}[section]\setcounter{summarynum}{0}

\newcounter{problemnum}[section]\setcounter{problemnum}{0}
\newenvironment{proof}[1][]{\textbf{Proof.} }{}

\begin{document}

\title{ Diffusion and Self--Organized Criticality in Ricci Flow Evolution of
Einstein and Finsler Spaces}
\date{October 10, 2010}
\author{ Sergiu I. Vacaru\thanks{
sergiu.vacaru@uaic.ro, Sergiu.Vacaru@gmail.com;\newline
http://www.scribd.com/people/view/1455460-sergiu } \\
%EndAName
{\quad} \\
{\small {\textsl{\ Science Department, University "Al. I. Cuza" Ia\c si},} }%
\\
{\small {\textsl{\ 54 Lascar Catargi street, 700107, Ia\c si, Romania}} }}
\maketitle

\begin{abstract}
Imposing non--integrable constraints on Ricci flows of (pseudo) Riemannian metrics we model mutual transforms to, and from, non--Riemannian spaces. Such evolutions of geometries and physical theories can be modelled for nonholonomic manifolds and vector/ tangent bundles enabled with fundamental geometric objects determining Lagrange--Finsler and/or Einstein spaces. Prescribing corresponding classes of generating
functions, we construct different types of stochastic, fractional, nonholonomic etc models of evolution for nonlinear dynamical systems, exact solutions of Einstein equations and/or Lagrange--Finsler configurations. The main result of this paper consists in a  proof of existence of unique and positive solutions of nonlinear diffusion equations which can be related to stochastic solutions in gravity and Ricci flow theory. This allows us to formulate stochastic modifications of Perelman's functionals and prove the main theorems for stochastic Ricci flow evolution. We show that nonholonomic Ricci flow diffusion  can be with self--organized critical behavior, for gravitational and Lagrange--Finsler systems, and that a
statistical/ thermodynamic analogy to stochastic geometric evolution can be formulated.

\textbf{Keywords:}\ stochastic process, nonlinear diffusion, Ricci flows,
Einstein manifold, Finsler space, nonlinear connection, nonholonomic
manifold, exact solutions in gravity, self--organized criticality.

\vskip1pt MSC2010:\ 53C44, 60H15, 76S05, 35Q76, 35R60, 53C25, 53B40,

{\ \qquad \qquad } 83C15

PACS2008:\ 02.50.Ey, 05.40.-a, 05.65.+b, 04.20.Jb
\end{abstract}

\tableofcontents

\section{Introduction}

The purpose of this paper is to analyze models of Ricci flows of random
metrics with self--organize criticality in general relativity (and various
modifications, for instance, of Lagrange--Finsler type theories of gravity)
in the framework of stochastic evolution equations. In other words, we shall
analyze scenarios of stochastic/diffusion Ricci evolution of curved spaces
with rich geometric/physical structure\footnote{%
when, for instance, metrics are solutions of Einstein equations with certain
prescribed symmetries and/or nonholonomic constraints, solitonic
configurations, stochastic properties etc} resulting in random gravitational
systems with critical points as attractors.

This is a partner work to \cite{vpart1}; our approach to stochastic
solutions and diffusion of gravitational fields is elaborated in the
framework of the theory of stochastic evolution equations following the
methods elaborated in \cite{barbu1,barbu2,barbu3,barbu4} (see also
references therein) but modified/adapted for curved spaces and nonholonomic
Ricci evolution of Riemannian and non--Riemannian geometries, \cite%
{vricci1,vricci2,vricci3,vricci4}).

The theory of stochastic processes and diffusion on curved spaces has been
studied in mathematics and physics from different perspectives related to
new directions in differential geometry and partial differential equations, geometric analysis and evolution theory, kinetic and thermodynamic
processes with local anisotropy, and various applications in cosmology and
astrophysics. We refer to a series of works containing original key ideas,
methods and reviews of results \cite{elw,ikeda,emery,vstoch3,vstoch5,
vstoch6,hu,chev,dunk,bailleul,herrmann,herrmann1}.

In our recent paper \cite{vpart1}, we studied exact solutions in gravity
defined by stochastic generating functions and nonholonomic diffusion of
gravitational fields. Such generic off--diagonal metrics
(which can not be diagonalized by coordinate transforms) describe generic
nonlinear and random processes with very complex structure of gravitational
''spacetime ether'', for instance, solitonic interactions \cite%
{vsolit1,vsolit2}, fractional configurations \cite%
{vfract1,vfract2,vfract3bal} and nonlinear wave interactions and geometric
evolution \cite{vsingl2,vrichiwave1,vrichiwave2}.

In this work, the gravitational stochastic Ricci flow and field interactions
phenomena with self--organized criticality are studied following a synthesis
of the method of anholonomic deformation/frames of constructing exact
solutions in gravity \cite{vsgg,vexsol1,vrevflg,vexsol2,vexsol3} and the
theory of nonholonomic diffusion on curved spaces (additionally to above
mentioned works, we cite our contributions on nonholonomic diffusion in
(super) vector bundles and locally anisotropic kinetics and thermodynamics %
\cite{vstoch3,vstoch4,vstoch5,vstoch6}). To the best our knowledge this is
the first attempt to provide an unified geometric formalism for locally
anisotropic diffusion processes and Ricci flows of Einstein/--Finsler spaces, 
when certain classes of nonholonomic constraints result in stochastic
evolution equations which can be approached in a mathematically strict way %
\cite{barbu1,barbu2,barbu3,barbu4}.\footnote{%
During last two decades, self--organized criticality is widely studied in
physics \cite{jen,caf,car,hwa,jan}. Our idea is that such effects are
possible for gravitational fields and their evolution being derived as
corresponding solutions. Classical vacuum structure in general relativity
may have various sophisticate nontrivial topological and geometric
configurations with possible stochastic, solitonic, instanton, black hole,
wormhole etc properties. We can say that a curved spacetime is modeled by a
nontrivial fundamental gravitational ''ether'' which, as a matter of
principe, may have a very complex nonlinear sure and/or random behavior
which may result in configurations with self--organized criticality. The
term ''porous'' media is a conventional one for gravitational configurations
related to corresponding equations which are formally similar to certain
equations for real porous matter. In this work, we show in explicit form
what type of nonholonomic constraints we have to impose on gravitational
field equations, and their possible geometric evolution, in order to have
critical points as attractors.}

This paper is organized as follow. In section \ref{secrf} we provide an
introduction to the theory of Ricci flow evolution of nonholonomic
geometries. Section \ref{secnd} is a development of  some recent results
on stochastic diffusion equations (in our case) on nonholonomic manifolds
and to proof of existence of unique and positive solutions for such systems
in gravity and geometric mechanics. In section \ref{secsnrf} we develop the
theory of stochastic nonholonomic Ricci flows: there are considered
stochastic modifications of Perelman's functionals and proven the main
theorems for stochastic evolution equations. Finally, a statistical analogy
for stochastic Ricci flows is proposed.

\vskip5pt

\textbf{Acknowledgement: } The author is grateful to Academician V. Barbu
for the opportunity to present a part of results of this paper at the
Seminar of Institute of Mathematics ''O. Mayer' of Romanian Academy, Ia\c{s}%
i, Romania.

\section{Ricci Flow Evolution and Nonholonomic Manifolds}

\label{secrf} The Ricci flow theory \cite{ham1,gper1,gper2,gper3}, related
geometric analysis and various applications (see \cite{caozhu,kleiner,rbook}
for reviews of results and methods) became one of the most intensively
developing branch of modern mathematics. The most important achievement of
this theory was the proof of W. Thurston's Geometrization Conjecture by
Grisha Perelman \cite{gper1,gper2,gper3}. The main results on Ricci flow
evolution were proved originally for (pseudo) Riemannian and K\"{a}hler
geometries. In a series of our works \cite%
{vricci1,vricci2,vricci3,vricci4,vfract1,vrfsol2,vrfsol3}, we studied Ricci
flow evolutions of geometries and physical models (of gravity with symmetric
and nonsymmetric metrics, geometric mechanics, fractional and noncommutative
generalizations) when the field equations are subjected to nonholonomic
constraints and the field/evolution solutions, mutually transform as Riemann
and generalized non--Riemann geometries. In this section we outline the
anholonomic deformation method of constructing exact solutions in gravity
and Ricci flow theory in a form necessary for further developments (in next
sections) for stochastic evolution and Ricci flow diffusion.

\subsection{Modelling Einstein and Finsler geometries on nonholonomic
manifolds}

In the following we shall consider a nonholonomic manifold $\mathbf{V}=(V,%
\mathcal{N})$ enabled with nonlinear connection (N--connection) structure $%
\mathbf{N}: T\mathbf{V}\rightarrow h\mathbf{V}\oplus v\mathbf{V}$ as a
nonholonomic distribution $\mathcal{N}=\mathbf{N}$ defining a conventional
horizontal (h) and vertical (v) splitting on $V.$ \ We consider that $V$ is
a four dimensional, 4--d, (pseudo) Riemann manifold of signature $(+,+,-+).$
For models of Lagrange--Finsler geometry, we can take $\mathbf{V}=TM$ \ to
be the total space of a manifold $M,$ when the typical fiber in such a
manifold is provided with a hyperbolic structure in order to mimic a local
(pseudo) Euclidian structure. We shall use the notations from \cite{vpart1}
(readers may find details on the geometry of nonholonomic manifolds \cite%
{vranceanu1,vranceanu2,vranceanu3}) in monograph \cite{bejf}; for purposes
of modern geometry, mechanics and physics and mathematical relativity, the
formalism is developed in \cite{ma,vrevflg,vsgg}; the concept of
N--connection is contained in coefficient form in \cite{cartan} being
developed in details on a number of works on Lagrange--Finsler geometry and
generalizations/ modifications \cite{bejancu,bcs,vstoch4,vricci2,vricci4}).

We label local coordinates on $\mathbf{V}$ in the form $u^{\alpha
}=(x^{i},y^{a}),$ were $i,j,...=1,2$ and $a,b,...=3,4$ (in brief, we write $%
\mathbf{u=(x,y)}$)$;$ similar values are taken by any variants of primed,
underlined etc indices, for instance, $\underline{\alpha }=(\underline{i},%
\underline{a})$ and $\beta ^{\prime }=\left( j^{\prime },b^{\prime }\right) $%
. A local frame and co--frame are written $e_{\alpha }=(e_{i},e_{a})$ and $%
e^{\beta }=(e^{j},e^{b}),$ when frame transforms are parametrized $e_{\alpha
}=e_{\ \alpha }^{\alpha ^{\prime }}(\mathbf{u})e_{\alpha ^{\prime }}.$

\subsubsection{Nonlinear connections and Einstein equations}

A N--connection structure on $\mathbf{V,}$ with local coefficients $%
\{N_{i}^{a}\}$ stated with respect to a coordinate basis, allows us to
define the so--called N--elongated (equivalently, N--adapted) frames, i.e. ,
respectively, partial derivatives and differentials,
\begin{eqnarray}
\mathbf{e}_{\alpha } &\doteqdot &\left( \mathbf{e}_{i}=\partial
_{i}-N_{i}^{a}\partial _{a},e_{b}=\partial _{b}=\frac{\partial }{\partial
y^{b}}\right) ,  \label{ddr} \\
\mathbf{e}_{\ }^{\beta } &\doteqdot &\left( e^{i}=dx^{i},\mathbf{e}%
^{a}=dy^{a}+N_{i}^{a}dx^{i}\right) .  \label{ddf}
\end{eqnarray}%
With respect to such bases, the geometric objects are written in N--adapted
form and called as distinguished objects (in brief, d--objects), for
instance, d--vectors, d--tensors, d--connections etc.

Our geometric arena consists from nonholonomic manifolds/ bundles
(we shall use as equivalent the terms, spaces, or spacetimes,
 for corresponding signatures) given by
data $\left( \mathbf{N},\mathbf{g},\mathbf{D}\right) ,$ where the d--metric $%
\mathbf{g}$ is parametrized in the form
\begin{equation}
\ \mathbf{g}=g_{ij}{dx^{i}\otimes dx^{j}}%
+h_{ab}(dy^{a}+N_{k}^{a}dx^{k}){\otimes }(dy^{b}+N_{k}^{b}dx^{k}),
\label{dm}
\end{equation}%
and $\mathbf{D}$ may be taken to be the canonical d--connection $\widehat{%
\mathbf{D}}=\left( h\widehat{D},v\widehat{D}\right) $ uniquely defined from
the conditions that it is metric compatible, $\widehat{\mathbf{D}}\mathbf{g}%
=0,$ and with vanishing ''pure'' h-- and v--components of torsion $\widehat{%
\mathbf{T}}$ of $\ \widehat{\mathbf{D}}.$ Even there are nontrivial h--v
components of torsion, there is a distortion relation
\begin{equation}
\widehat{\mathbf{D}}=\nabla +\widehat{\mathbf{Z}},  \label{distcon}
\end{equation}%
where all values, i.e. the \ canonical d--connection, $\widehat{\mathbf{D}}%
=\{\widehat{\mathbf{\Gamma }}_{\ \beta \gamma }^{\alpha }\},$ the
Levi--Civita connection $\nabla =\{\Gamma _{\ \beta \gamma }^{\alpha }\}$
(subjected to the conditions that it is torsionless and $\nabla \mathbf{g}=0$%
) and the distortion, $\widehat{\mathbf{Z}}=\{\widehat{\mathbf{Z}}_{\ \beta
\gamma }^{\alpha }\sim \widehat{\mathbf{T}}_{\ \beta \gamma }^{\alpha }\},$
are completely and uniquely defined by the same metric tensor $\mathbf{g.}$
We can work equivalently with both linear connections $\widehat{\mathbf{D}}$
and/or $\nabla $ (the first one is N--adapted but the second one is not).%
\footnote{%
The formulas for N--adapted coefficients of geometric objects (\ref{distcon}%
) are given in many our works (see, for instance, \cite{vpart1,vrevflg,vsgg}
and \cite{ma}, for explicit constructions in Lagrange--Finsler gravity). For
simplicity, we omit such considerations in this work.}

The Einstein equations in general relativity can be written for $%
\widehat{\mathbf{D}},$
\begin{eqnarray}
&&\widehat{\mathbf{R}}_{\ \beta \delta }-\frac{1}{2}\mathbf{g}_{\beta \delta
}\ ^{s}R=\mathbf{\Upsilon }_{\beta \delta },  \label{cdeinst} \\
&&\widehat{L}_{aj}^{c}=e_{a}(N_{j}^{c}),\ \widehat{C}_{jb}^{i}=0,\ \Omega
_{\ ji}^{a}=0,  \label{lcconstr}
\end{eqnarray}%
where $\widehat{\mathbf{R}}_{\ \beta \delta }$ is the Ricci tensor for $%
\widehat{\mathbf{\Gamma }}_{\ \alpha \beta }^{\gamma },\ ^{s}R=\mathbf{g}%
^{\beta \delta }\widehat{\mathbf{R}}_{\ \beta \delta }$ and the source
d--tensor $\mathbf{\Upsilon }_{\beta \delta }$ is constructed for the same
metric but with $\ \widehat{\mathbf{D}}$ (formulas are similar to those in
general relativity with $\nabla ;$ we consider $\mathbf{\Upsilon }_{\beta
\delta }\rightarrow \varkappa T_{\beta \delta }$ \ for $\widehat{\mathbf{D}}%
\rightarrow \nabla $). If the constraints (\ref{lcconstr}) are satisfied the
distortion d--tensors $Z_{\ \alpha \beta }^{\gamma }$ from (\ref{distcon}),
determined by d--torsion $\widehat{\mathbf{T}}_{\ \alpha \beta }^{\gamma },$
became zero and (\ref{cdeinst}) are equivalent to the ''standard'' equations
\begin{equation}
R_{\ \beta \delta }-\frac{1}{2}g_{\beta \delta }R=\varkappa T_{\beta \delta
},  \label{einsteq}
\end{equation}%
written for the Levi--Civita connection $\nabla =\{\Gamma _{\ \alpha \beta
}^{\gamma }\}.$ In formulas (\ref{einsteq}), $R_{\ \beta \delta }$ and $R$
are respectively the Ricci tensor and scalar curvature of $\nabla ;$ it is
also considered the energy--momentum tensor for matter, $T_{\alpha \beta },$%
where $\varkappa =const.$

Here we note that it is not possible to integrate analytically, in general
form, the Levi--Civita form of the system of partial differential equations (%
\ref{einsteq}) because of it generic nonlinearity and complexity.
Nevertheless, it is possible to solve in very general forms the version for
the canonical d--connection (and for the Cartan d--connection in Finsler
gravity) the system (\ref{cdeinst}), see details in Refs. \cite%
{vexsol2,vexsol3,vexsol1}, which motivates the idea to introduce the
so--called Lagrange--Finsler variables in general relativity. Imposing
additionally the conditions (\ref{lcconstr}), we can restrict the
nonholonomic integral varieties and generate solution of Einstein equations
for the Levi--Civita connection.

\subsubsection{Lagrange--Finsler structures and Einstein gravity}

Let us consider how data $\left( \mathbf{N},\mathbf{g},\mathbf{D}\right) $
can be parametrized on $\mathbf{V}=TM$ in such a form that they will define
a Lagrange and/or Finsler geometry. We introduce frame transforms $\
^{L}g_{\alpha ^{\prime }\beta ^{\prime }}(\mathbf{u})=e_{\ \alpha ^{\prime
}}^{\alpha }(\mathbf{u})e_{\ \beta ^{\prime }}^{\beta }(\mathbf{u})\mathbf{g}%
_{\alpha \beta }(\mathbf{u}),$ when the coefficients of d--metric (\ref{dm})
are transformed into coefficients of
\begin{equation}
\ \ ^{L}\mathbf{g}=\ ^{L}g_{ij}{dx^{i}\otimes dx^{j}}+\
^{L}h_{ab}(dy^{a}+\ ^{L}N_{k}^{a}dx^{k}){\otimes }(dy^{b}+\
^{L}N_{k}^{b}dx^{k}),  \label{sasakilm}
\end{equation}%
where $\ ^{L}g_{ij}\sim \ ^{L}h_{ab}=\frac{1}{2}\frac{\partial
^{2}L(x^{i},y^{c})}{\partial y^{a}\partial y^{b}}$ and $\ ^{L}N_{k}^{a}$ are
respectively the Hessian and canonical N--connection defined by a regular
Lagrangian $L(\mathbf{u})=L(x^{i},y^{c}),$ see details in \cite%
{ma,vrevflg,vsgg} (and, for Ricci flows of Lagrange--Finsler structures, %
\cite{vricci3}). The d--metric (\ref{sasakilm}) is called the Sasaki lift on
$TM$ of a regular Lagrange structure. In arbitrary frames/coordinates, a
Lagrange d--metric $\ ^{L}\mathbf{g}$ is represented in the form $\mathbf{g}$
(\ref{dm}) and, inversely, any $\mathbf{g}$ can be transformed into a $\ ^{L}%
\mathbf{g,}$ into certain Lagrange variables for a correspondingly chosen
distribution/generating Lagrange function $L(\mathbf{u}).$ The formal
constructions can be performed geometrically in similar forms on (pseudo)
Riemannian spacetimes and tangent bundles but have different physical
interpretations (in the first case, we can elaborate analogous
Lagrange--Finsler models of Einstein gravity but in the second case it is
provided a Finsler similar geometrization of regular Lagrange mechanics).

We note that taking $L=F^{2}(x,y),$ where the homogeneous on $y$--variables
real function $F$ is the fundamental/generating Finsler function (see
rigorous mathematical definitions and details in \cite{ma,bejancu,bcs}), we
model a Finsler geometry on a (pseudo) Riemannian manifold or on $\mathbf{V=}%
TM.$ Using the geometric formalism of nonholonomic distributions, associated
N--connections and the geometry of nonholonomic manifolds, we work in a
unified form with all types of (pseudo) Riemann and Lagrange--Finsler
geometries. We distinguish such constructions by additional suppositions on
the structure of manifold/bundle spaces and the type of linear connection, $%
\widehat{\mathbf{D}},\nabla $ or (the Cartan--Finsler d--connection) $\ ^{c}%
\mathbf{D,}$ we chose for the geometric data $\left( \mathbf{N},\mathbf{g},%
\mathbf{D}\right) .$

Finally, we shall say that there are used nonholonomic (Lagrange, or
Finsler, variables) in Einstein gravity on $\mathbf{V}$ if the theory is
defined by any data $\left( \mathbf{N}\sim \ \ ^{L}\mathbf{N};\mathbf{g}\sim
\ ^{L}\mathbf{g};\mathbf{D}\sim \widehat{\mathbf{D}},\mbox{ or }\ \ ^{c}%
\mathbf{D}\right) ,$ which can be equivalently transformed, via distortion
relations of type (\ref{distcon}), into data $\left( \mathbf{g},\nabla
\right) .$ For theories on $TM,$ we can elaborate
Einstein--Lagrange/--Finsler like theories (with metrics and connections
depending on velocity type coordinates, $y^{a}$) using instead of $\nabla $
any convenient for physical purposes $\widehat{\mathbf{D}},$ or $\ ^{c}%
\mathbf{D}$ (or any other metric compatible d--connection completely defined
by $\mathbf{g}$ and adapted to a chosen $\mathbf{N}$). In abstract form, a
N--adapted Einstein model on a general nonholonomic manifold $\mathbf{V}$
generates for corresponding nonholonomic distributions an Einstein and/or
Lagrange--Finsler, spacetime geometry.

\subsection{Nonholonomic Ricci flows of Einstein/--Finsler spaces}

The original Ricci flow theory was proposed \cite{ham1} with an evolution
equation for a set of Riemannian metrics $g_{\alpha \beta }(\chi )$ and
corresponding Ricci tensors $\ R_{\alpha \beta }(\chi )$ parametrized by a
real parameter $\chi .$ We can write the Hamilton's equations in the so--called
normalized form, see details in \cite{caozhu,kleiner,rbook},
\begin{equation}
\frac{\partial }{\partial \chi }g_{\underline{\alpha }\underline{\beta }%
}=-2\ R_{\underline{\alpha }\underline{\beta }}+\frac{2r}{5}Rg_{\underline{%
\alpha }\underline{\beta }},  \label{feq}
\end{equation}%
describing (holonomic) Ricci flows with respect to a coordinate base $%
\partial _{\underline{\alpha }}=\partial /\partial u^{\underline{\alpha }};$
the normalizing factor $r=\int \ RdVol/Vol$ is introduced in order to
preserve the volume $Vol;$ $\ R_{\underline{\alpha }\underline{\beta }}$ and
$\ R=g^{\underline{\alpha }\underline{\beta }}\ R_{\underline{\alpha }%
\underline{\beta }}$ are computed for the Levi--Civita connection $\nabla .$%
\footnote{%
We underline the indices with respect to the coordinate bases but not with
respect to some 'N--elongated' local bases.} Grisha Perelman's fundamental
results \cite{gper1,gper2,gper3} were based on original idea to prove that
the Ricci flow is not only a gradient flow but also can be defined \ as a
dynamical system on the spaces of Riemannian metrics. He introduced two
Lyapunov type functionals and proved that evolution equations of type (\ref%
{feq}) can be derived following a corresponding variational calculus.

In our approach, we studied modified Ricci flow evolution equations, and the
corresponding N--adapted functionals, for generalized commutative and
noncommutative geometries (Riemann and Lagrange--Finsler ones) \cite%
{vricci1,vricci2,vricci3,vricci4,vfract1}. We also were interested to study
Ricci flows of exact solutions in various theories of gravity \cite%
{vrfsol2,vrfsol3,vrichiwave1,vrichiwave2}. \ In this work, we shall
elaborate a theory of Ricci evolution with self--organized criticality for
stochastic solutions for Einstein spaces following geometric using the
canonical d--connection $\widehat{\mathbf{D}}.$

If $\nabla \rightarrow \widehat{\mathbf{D}},$ we have to change $R_{\alpha
\beta }\rightarrow \widehat{\mathbf{R}}_{\alpha \beta }$ which transforms
equations (\ref{feq}) into an N--adapted system of evolution equations for
Ricci flows of (in this work, we consider symmetric metrics, see details in %
\cite{vricci1,vricci2}),
\begin{eqnarray}
\frac{\partial }{\partial \chi }g_{ij} &=&2\left[ N_{i}^{a}N_{j}^{b}\ \left(
\underline{\widehat{R}}_{ab}-\lambda h_{ab}\right) -\underline{\widehat{R}}%
_{ij}+\lambda g_{ij}\right] -h_{cd}\frac{\partial }{\partial \chi }%
(N_{i}^{c}N_{j}^{d}),  \label{e1} \\
\frac{\partial }{\partial \chi }h_{ab} &=&-2\ \left( \underline{\widehat{R}}%
_{ab}-\lambda h_{ab}\right) ,\   \label{e2} \\
\ \widehat{R}_{ia} &=&0\mbox{ and }\ \widehat{R}_{ai}=0,  \label{e3}
\end{eqnarray}%
where the Ricci coefficients $\underline{\widehat{R}}_{ij}$ and $\underline{%
\widehat{R}}_{ab}$ are computed with respect to coordinate coframes and the
cosmological constant $\lambda $ includes the normalization factor $\frac{2r%
}{5}.$ On nonholonomic manifolds, we can prescribe any convenient for our
purposes nonholonomic distributions. For simplicity, we shall fix such
distributions when $\frac{\partial }{\partial \chi }(N_{i}^{c})=0$ $\ $and
model stochastic and/or Ricci flow evolution for nonholonomic Einstein
manifolds with $\widehat{R}_{ab}=\lambda h_{ab}$ and $\widehat{R}%
_{ij}=\lambda g_{ij}$ for so--called ''stationary configurations'', when the
equations (\ref{cdeinst}) with the source determined by $\lambda $ are
satisfied. \ The h-- and v--components of Ricci flow evolutions of
gravitational configurations for $\frac{\partial }{\partial \chi }%
(N_{i}^{c})=0$ , are similar to those derived from Perelman's nonholonomic
functionals\ after corresponding nonholonomic deformations.\footnote{
In next section, we derive such equations for stochastic Ricci flows and
related Perelman's functionals.} Such equation can be written in the form
\begin{eqnarray}
\frac{\partial g_{ij}}{\partial \chi } &=&-2\widehat{R}_{ij},\quad \ \frac{%
\partial h_{ab}}{\partial \chi } =-2\widehat{R}_{ab},\   \label{rf1} \\
\ \widehat{R}_{ia} &=&0\mbox{ and }\ \widehat{R}_{ai}=0.  \notag
\end{eqnarray}

For simplicity, we shall analyze Ricci flows of families of ansatz for
d--metrics parametrized in the form
\begin{eqnarray}
\ _{\chi }\mathbf{g} &\mathbf{=}&e^{\psi (x^{k})}{dx^{i}\otimes dx^{i}}%
+h_{3}(x^{k},t,\chi )\mathbf{e}^{3}{\otimes }\mathbf{e}^{3}+h_{4}(x^{k},t,%
\chi )\mathbf{e}^{4}{\otimes }\mathbf{e}^{4},  \label{genans} \\
\mathbf{e}^{3} &=&dt+w_{i}(x^{k},t,\chi )dx^{i},\mathbf{e}%
^{4}=dy^{4}+n_{i}(x^{k},t,\chi )dx^{i}.  \notag
\end{eqnarray}%
The system of equations (\ref{rf1}) for $\widehat{R}_{ij}=0$ and $\Upsilon
_{2}(x^{k},v)=\lambda $ and existing one Killing symmetry, on vector $%
\partial /\partial y^{4}$ (the coefficients do not depend on variable $y^{4}$%
), with evolution only of the v--parts, transform into
\begin{eqnarray}
&&\ddot{\psi}+\psi ^{\prime \prime }=0,  \label{eq1} \\
&&\frac{\partial }{\partial \chi }h_{3}=-\frac{h_{3}\phi ^{\ast }}{h_{4}},%
\frac{\partial }{\partial \chi }h_{4}=-\frac{h_{4}\phi ^{\ast }}{h_{3}},
\label{eq2} \\
&&\beta w_{i}+\alpha _{i}=0,  \label{eq3} \\
&&n_{i}^{\ast \ast }+\gamma n_{i}^{\ast }=0  \label{eq4}
\end{eqnarray}%
where
\begin{eqnarray}
~\phi (\chi )~ &=&\phi (x^{k},t,\chi )=\ln |\frac{h_{4}^{\ast }}{\sqrt{%
|h_{3}h_{4}|}}|,\   \label{auxphi} \\
\alpha _{i} &=&h_{4}^{\ast }\partial _{i}\phi ,\ \beta =h_{4}^{\ast }\ \phi
^{\ast },\ \gamma =\left( \ln |h_{4}|^{3/2}/|h_{3}|\right) ^{\ast }.  \notag
\end{eqnarray}%
In the above formulas we wrote the partial derivatives in the form $%
a^{\bullet }=\partial a/\partial x^{1},$\ $a^{\prime }=\partial a/\partial
x^{2},$\ $a^{\ast }=\partial a/\partial t$ and we shall use $\partial _{\chi
}a=\partial a/\partial \chi .$ The conditions of zero torsion, i. e.
constraints (\ref{lcconstr}), are written in the form
\begin{equation}
w_{i}^{\ast }=\mathbf{e}_{i}\ln |h_{4}|,\mathbf{e}_{k}w_{i}=\mathbf{e}%
_{i}w_{k},\ n_{i}^{\ast }=0,\ \partial _{i}n_{k}=\partial _{k}n_{i}.
\label{lcconstr1}
\end{equation}%
We have to impose additionally (\ref{lcconstr1}) if we wont to consider
nonholonomic Ricci flows determined by equations (\ref{eq1})--(\ref{eq4}) in
a form when the configurations for the Levi--Civita connections $\nabla
(\chi )$ are extracted.

If $h_{4}^{\ast }(\chi )\neq 0;\Upsilon _{2}=\lambda \neq 0,$ we get $\phi
^{\ast }(\chi )\neq 0$ and we can generate families of exact solutions of (%
\ref{cdeinst}), with diagonal nontrivial source $\mathbf{\Upsilon }_{\delta
}^{\alpha }=diag[\lambda ,\lambda ,0,0],$ if
\begin{equation}
h_{4}^{\ast }(\chi )=2h_{3}(\chi )h_{4}(\chi )\lambda /\phi ^{\ast }(\chi ).
\label{4ep2a}
\end{equation}%
We conclude that the family of ansatz for d--metrics (\ref{genans}) \
subjected to the conditions (\ref{eq1})--(\ref{4ep2a}) define Ricci flow
evolutions on a real parameter $\chi $ of a class of generic off--diagonal
solutions determining correspondingly (non) holonomic Einstein manifolds.
They depend explicitly on the type of families of generating functions $\phi
(\chi ).$ If such functions are random ones subjected to the conditions to
solve certain stochastic/diffusion equations, we can say that we generated
stochastic Einstein equations (for details, see our partner work \cite%
{vpart1} where there are also analyzed classes of solutions of Einstein
equations with $h_{4}^{\ast }=0,$ or $h_{3}^{\ast }=0;$ the length of this
paper does not allow us to analyze the Ricci flow evolution of such more
special classes of \ Einstein manifolds) evolving, in general, randomly, on
parameter $\chi .$

\section{Nonholonomic Diffusion and Self--Organized Cri\-ticality}

\label{secnd} We studied various examples with physically important
solutions (black holes/ellipsoid, wormholes, pp-- and/or solitonic waves
etc) evolving under nonholonomic Ricci flows \cite%
{vrfsol2,vrfsol3,vrichiwave1,vrichiwave2}. Those solutions were with sure
coefficients and sources. To our knowledge, it was not yet analyzed the
evolution of, in general, non--Riemann geometries when certain coefficients
of metrics and connections are random ones, i.e. the Ricci flow theory is
with stochastic evolution.

In this work, we shall develop such a theory of stochastic Ricci flows, for
simplicity, for families of metrics (\ref{genans}), which allows us to
generate solutions in explicit form and to prove the phenomena of
self--organizing criticality of gravitational fields for various stationary
and stochastic Ricci flow evolution. The results from papers \cite%
{barbu1,barbu2,barbu3,barbu4} are crucial in proving the existence of unique
and positive solutions. It should be emphasized here that the formalism of
nonholonomic distributions with associated N--connections is a very
important geometric tool for \ connecting the theory of Ricci flows to
nonlinear diffusion and gravity theories and possible applications.

\subsection{Stochastic diffusion equations on nonholonomic manifolds}

We studied some examples of nonlinear stochastic diffusion equations on
nonholonomic $\mathbf{V}$ in the partner work \cite{vpart1}, for the
so--called N--adapted $\left( \mathbf{L,A}\right) $--diffusion and
stochastic solutions of Einstein equations. It was constructed the
corresponding Laplace--Beltrami operator for the canonical d--connection $%
\widehat{\mathbf{D}}.$ We can generate stochastic d--metrics of type (\ref%
{genans}) if we take random generating functions $\phi (\chi )$ (\ref{auxphi}%
), with stochastic evolution parameter $\chi .$ Using (\ref{4ep2a}), various
families of exact solutions of Einstein equations can be defined. An
important mathematical/physical problem is to state certain general
conditions when the Ricci flow evolution equation have unique and positive
solutions.

Let us introduce of nonholonomic geometric and N--adapted stochastic
calculus framework.  We consider an open bounder domain $\mathcal{U}\subset
\mathbf{V,}$\footnote{%
we suppose that our nonholonomic manifold is covered by such open regions}
with the spatial $\dim \mathbf{V}\leq 3$ with smooth boundary $\partial
\mathcal{U},$ when the Laplace--Beltrami operator $\ \widehat{\bigtriangleup
}$ is completely defined by data $\left( \mathbf{N},\mathbf{g},\widehat{%
\mathbf{D}}\right) .$ Out stochastic nonholonomic Ricci flows will be
modelled using a nonlinear evolution equation%
\begin{eqnarray}
\delta U(\chi )-\widehat{\Delta }\Psi (U(\chi ))\delta \chi &\ni &\sigma
(U(\chi ))\delta \mathcal{W}(\chi ),\mbox{ on }(0,\infty )\times \mathcal{U},
\notag \\
\Psi (U(\chi )) &\ni &0,\mbox{ on }(0,\infty )\times \partial \mathcal{U},
\label{nonlindiff} \\
U(0,u) &=&u\mbox{ on }\mathcal{U}.  \notag
\end{eqnarray}%
In the above formulas, $\delta \mathcal{W}(\chi )$ is a Wiener process, the
initial datum $u$ is given for ''rolling'' the stochastic process on
nonholonomic curved manifold $\mathbf{V},$ locally on carts of a covering
atlas and we can consider any maximal monotone (possible multivalued) graph
with polynomial growth of ''coercive'' function $\Psi :\mathbb{R}\rightarrow
2^{\mathbb{R}}.$ It is also possible to introduce a correspondingly
parametrized random forcing term%
\begin{equation}
\sigma (U)dW=\sum\limits_{k=1}^{\infty }\nu _{k}U\langle l,e_{k}\rangle
_{2}e_{k}  \label{randforc}
\end{equation}%
for any $l\in L^{2}(\mathcal{U}),$ where $\langle \cdot ,\cdot _{k}\rangle
_{2}$ and $e_{k}$ are respectively the scalar product and an orthonormal
basis in $L^{2}(\mathcal{U})$ (which is determined, in our case, for a given
d--metric structure on $\mathbf{V}$), when $\nu _{k}$ is a sequence of
positive number and the set $\beta _{k}=\langle l,e_{k}\rangle $ $\ $can be
associated to a sequence of independent standard Brownian motions on a
filtered probability space $\left( \Omega ,\mathcal{F},\{\mathcal{F}_{\chi
}\}_{\chi \geq 0},\mathbb{P}\right) .$

The physical meaning of equations (\ref{nonlindiff}) depends on the type of
models we are going to elaborate. For instance, such an equation described
the dynamics of flows in porous media; for more general assumptions, it
models phase transitions with melting and solidification processes in the
presence of a random force, i. e. term. Our proposal \cite{vpart1} was to
introduce random generating functions for solutions in gravity \ (with the
possibility to include noise for matter energy--momentum ) considering
vacuum non--vacuum spacetime configurations as a complex ''ether'' media
(following Mach ideas on inertial forces and gravity) with possible
nonholonomic structure, singularities and diffusion of gravitational and
matter fields.

In this work, we show that gravitational fields (curved spacetime) may
self--organize itself due to stochastic gravitational effects and/or under
nonholonomic stochastic Ricci flow evolution. It is possible to formulate
stochastic analogs of gravitational Ricci flow evolution equations (a
geometric nonlinear diffusion driven by with stochastic components of the
Ricci tensor, with certain limits to nonholonomic versions Laplace--Beltrami
operators).

Our goal is to prove (using a synthesis of methods and results from \cite%
{barbu1,barbu2,barbu3,barbu4} and Perelman's functional approach \cite%
{gper1,gper2,gper3} generalized for nonholonomic geometries \cite%
{vricci1,vricci2}) that a class of stochastic evolution/gravitational field
equations can uniquely solved in a form satisfying positive conditions.
There will be used sure evolution analogs of equations (\ref{nonlindiff})
for a function $\widehat{f}(x^{1},x^{2},v,\chi)$ which is modelled by
\begin{equation}
\ \frac{\partial \widehat{U}}{\partial \chi }=-\widehat{\Delta }\widehat{U}%
+\left| \widehat{\mathbf{D}}\widehat{U}\right| ^{2}-R-S,  \label{analogstoch}
\end{equation}%
where $R+S$ is the scalar curvature of $\widehat{\mathbf{D}}.$ Considering
random generating functions we induce a gravitational random forcing term in
formula (\ref{randforc}). We say that $\Psi $ is such way defined that the
terms $-\widehat{\Delta }\widehat{U}+\left| \widehat{\mathbf{D}} \widehat{U}%
\right| ^{2}$ are analogs of $\widehat{\Delta }\Psi $ in (\ref{nonlindiff})
when a unique solution of this equation is related to a unique solution of (%
\ref{analogstoch}) in direct form or computing certain expectation values.
We shall also consider unique stochastic solution for any well defined
functions $\widehat{f}(\widehat{U})=\widehat{f}(u)=\widehat{f}%
(x^{1},x^{2},v,\chi ).$

\subsection{Existence of unique and positive solutions; physical setting for gravity and Lagrange--Finsler spaces}

On flat three dimensional
 real spaces, the existence problem for stochastic equations (%
\ref{nonlindiff}) with additive and multiplicative noise was studied in Ref. %
\cite{barbu2} and, with generalized conditions, in \cite{barbu1,barbu3}. We
shall search for general conditions of existence of equations (\ref%
{nonlindiff}) and (\ref{analogstoch}) on $\mathcal{U}\subset \mathbf{V}$ \
under such assumptions:

\begin{hypothesis}
\begin{enumerate}
\item \label{hypoth}The partition with $\mathcal{U}\subset \mathbf{V}$ and
localization of operators $\widehat{\Delta }$ (the domain of this operator
is $H^{2}(\mathcal{U})\cap H_{0}^{1}(\mathcal{U})),$ see below Notation \ref%
{notati}, and $\widehat{\mathbf{D}}$ are such way parametrized via
nonholonomic distributions that the coercive function $\Psi $ is a maximal
monotone multivalent function from $\mathbb{R}$ into $\mathbb{R},$ when $%
0\in \Psi (0);$

\item $\exists $ $C>0$ and $a\geq 0$ when $\forall \in \mathbb{R}$ we can
write $\sup \{|\theta |:\theta \in \Psi (r)\}\leq C\left( 1+|r|^{a}\right) ;$

\item it is possible to fix the nonholonomic distributions, the canonical
d--connection $\widehat{\mathbf{D}}$ and sequence $\nu _{k}$ in such a form
that locally $\sum\limits_{k=1}^{\infty }\nu _{k}^{2}\lambda
_{k}^{2}<+\infty ,$ for $\lambda _{k}$ being the eigenvalues of the
Laplace--Beltrami d--operator $-\widehat{\Delta }$ on $\mathcal{U}\subset
\mathbf{V}$ with Dirichle boundary conditions on $\partial \mathcal{U}. $
\end{enumerate}
\end{hypothesis}

Roughly speaking, the conditions of above hypothesis are defined to ''roll''
on $\mathcal{U},$ in N--adapted form, for the canonical d--connection $%
\widehat{\mathbf{D}}$ and $\widehat{\Delta },$ the assumptions from
Hypothesis 1.1 in Ref. \cite{barbu1}. Similarly, the equations (\ref%
{nonlindiff}) and further developments in this sections are ''nonholonomic
modifications'' of the results on nonlinear diffusion from the mentioned
papers \cite{barbu1,barbu2,barbu3} but related to stochastic Ricci flows via
profs of existence for (\ref{analogstoch}).\footnote{%
We recommend readers to consult those papers on nonlinear diffusion and
self--organized criticality where basic concepts and veri important results
are stated following a rigorous mathematical formalism. The limits of this
work does not allow us to repeat and develop in a detailed form those
results for h-- and v--splitting, with N--adapted constructions, on $%
\mathcal{U}\subset \mathbf{V.}$}

\begin{notation}
\label{notati}

\begin{enumerate}
\item We have $\sigma (u)\in L_{2}\left( L^{2}(\mathcal{U}),H^{-1}(\mathcal{U%
})\right) $ with corresponding spaces/ conditions for all Hilbert--Schmidt
d--operators from $L^{2}(\mathcal{U})$ into $H^{-1}(\mathcal{U}),$ when
there is the Lipschitz continuity from $H^{-1}(\mathcal{U})$ into $%
L_{2}\left( L^{2}(\mathcal{U}),H^{-1}(\mathcal{U})\right) .$

\item By $L^{p}(\mathcal{U}),p\geq 1,$ we denote the space of $p$%
--integrable functions with norm $|\cdot |_{p}.$ The spaces $H^{k}(\mathcal{U%
})\subset L^{2}(\mathcal{U}),$ ($k=1,2$), are the standard Sobolev spaces on
$\mathcal{U}.$ We consider that $H_{0}^{1}(\mathcal{U})$ is the subspace of $%
H^{1}(\mathcal{U})$ with zero trace on the boundary.

\item Denoting $\mathcal{H}$ as a Hilbert space, for $p,q\in \lbrack
1,+\infty ],$ we write \newline
$L_{W}\left[ \left( 0,\ ^{T}\chi \right) ;L^{p}(\Omega ;\mathcal{H})\right] $
for the space of all $q$--integrable processes $z:[0,\ ^{T}\chi ]\rightarrow
L^{p}(\Omega ;\mathcal{H})$ are adapted to the filtration $\{\mathcal{F}%
_{\chi }\}_{\chi \geq 0}.$ It is also used $C_{W}\left[ \left( 0,\ ^{T}\chi
\right) ;L^{2}(\Omega ;\mathcal{H})\right] $ \ for the space of all $%
\mathcal{H}$--valued N--adapted processes being mean square continuous. $L(%
\mathcal{H})$ denotes the space of bounded linear operators equipped with
the above introduced norm.

\item We model $\mathcal{H}$ as a distribution space $\mathcal{H}=H^{-1}(%
\mathcal{U})=\left( H_{0}^{1}(\mathcal{U})\right) ^{\prime }$ provided with
a N--adapted scalar product and norm defined by d--metric $\mathbf{g}$ and
canonical Laplace--Beltrami d--operator, $\mathbf{A}=$ $-\widehat{\Delta },$%
\begin{equation*}
\langle \ ^{1}f,\ ^{2}f\rangle =\int_{\mathcal{U}}\mathbf{A}^{-1}\ ^{1}f(\xi
)\ ^{2}f(\xi )\sqrt{|\mathbf{g}(\xi )|}\delta \xi ,\mbox{ for }|f|_{-1}=%
\sqrt{\langle \ f,\ f\rangle }.
\end{equation*}
\end{enumerate}
\end{notation}

Under above assumptions, on any $\mathcal{U}\subset \mathbf{V}$ and for a
parameter $\chi ,$ we can use the main result from \cite{barbu1} that if $%
u\in L^{p}(\mathcal{U}),p\geq \max \{2a,4\},$ then there is a unique strong
solution to equation (\ref{nonlindiff}) (any such solution is nonnegative if
the initial data $u$ are also nonnegative).

In our nonholonomic setting for gravity and flows of geometries, we can
solve different physical problems: We do not model usual porous media, but a
gravitational ''ether'' interacting as a solution of Einstein equations
and/or following a scenarios of Ricci flow stochastic evolution. For
instance, there are some important examples of modeling for gravitational
configurations by nonholonomic diffusion equations of type (\ref{nonlindiff}%
):

\begin{example}
\begin{enumerate}
\item For certain constants $\ ^{0}\chi ,\rho ,\ ^{1}\alpha ,\ ^{2}\alpha
\in (0,+\infty ),$ we can model a gravitational thermo--field theory with
heat conduction, or phase transitions in spacetime ''porous foam'' if
\begin{equation*}
\Psi (\chi )=\left\{
\begin{array}{c}
\ ^{1}\alpha (\chi -\ ^{0}\chi ),\mbox{\  for \ }\chi <\ ^{0}\chi ; \\
\lbrack 0,\rho ],\mbox{\  for \ }\chi =\ ^{0}\chi ; \\
\ ^{2}\alpha (\chi -\ ^{0}\chi )+\rho ,\mbox{\  for \ }\chi >\ ^{0}\chi .%
\end{array}%
\right|
\end{equation*}

\item In a spacetime with black holes and singularities, it is important to
consider the nonlinear singular nonholonomic diffusion equation%
\begin{equation*}
\delta U(\chi )-\rho div\left[ \ ^{0}\delta \left( U(\chi )\right) \widehat{%
\mathbf{D}}U(\chi )\right] d\chi =\sigma \left( U(\chi )\right) \delta
\mathcal{W}(\chi ),
\end{equation*}%
for N--adapted divergence $div$, where $\ ^{0}\delta $ is the Dirac measure
concentrated at the origin, which is possible for $\ \Psi (\chi )=\left\{
\begin{array}{c}
\rho \chi /|\chi |,\mbox{\  if \ }\chi \neq 0; \\
\lbrack -1,1],\mbox{\  if \ }\chi =0.%
\end{array}%
\right| $

\item Choosing $\Psi (\chi )=|\chi |^{\alpha }sign\chi $ with $0<\alpha \leq
1,$ for a local diffusion problem with free boundary and a random forcing
term proportional to $U(\chi )-\ ^{c}\chi ,$ when the value $\ ^{c}\chi $
determine the critical point of diffusion process $U(\chi ),$ we get the a
particular nonholonomic diffusion equation%
\begin{equation*}
\delta U(\chi )-\widehat{\Delta }\left( Hev(\chi )+\varkappa \right) \left[
U(\chi )-\ ^{c}u\ \right] d\chi =\sigma \left( U(\chi )-\ ^{c}u\ \right)
\delta \mathcal{W}(\chi ).
\end{equation*}%
In this formula, it is used the Heavside step function $Hev(\chi )=\{0,$ if $%
\ \chi <0;[0,1],$ if $\chi =0;1,$ if $\chi >0\}.$ Certain physical models
for self--organized criticality and their rigorous mathematical study are
given in Refs. \cite{caf,jen,barbu1}. For spacetime evolution, such
criticality can be related with behavior of certain effective physical
constants and/or diffusion gravitational phase transitions subjected to
nonholonomic constraints. One might consider scenarios when the
supercritical region $\left\{ U(\chi )>\ ^{c}u\right\} $ is absorbed
asymptotically during evolution by the critical one $\left\{ U(\chi )=\
^{c}u\right\} .$ Such models seem to have applications in modern quantum
gravity (as alternative to Horava--Lifshitz phase transitions) and or in
cosmology driven by Ricci flow models.
\end{enumerate}
\end{example}

 For the above given examples, we can not apply the general existence
theory of infinite dimensional stochastic equations in Hilbert space
(enabled with nonlinear maximal monotone operators). We have to elaborate a
N--adapted approach.
\begin{definition}
We call a solution of (\ref{nonlindiff}) any $\mathcal{H}$--valued continuous $\mathcal{F}_{\chi }$--adapted and N--adapted process $U(\chi
)=U(\chi ,z),$ for $z\in \mathcal{H}$ , on $[0,\ ^{T}\chi ]$ if $U\in L^{p}\left( \Omega \times (0,\ ^{T}\chi )\times \mathcal{U})\right) \cap L^{2}\left( 0,\ ^{T}\chi ;L^{2}(\Omega ;\mathcal{H})\right) ,p\geq a,$
 and $\exists \eta \in L^{p/a}\left( \Omega \times (0,\ ^{T}\chi )\times
\mathcal{U})\right) $ such that $\mathbb{P}$-a.s. 
\begin{eqnarray*}
&&\langle U(\chi ,z),\ e_{j}\rangle _{2}=
\langle z,\ e_{j}\rangle _{2}+ \\
&&\int\limits_{0}^{\chi }\int\limits_{\mathcal{U}}\eta (s,\xi )\widehat{%
\Delta }e_{j}(\xi )\sqrt{|\mathbf{g}(\xi )|}\delta \xi
ds+\sum\limits_{k=1}^{\infty }\nu _{k}\int\limits_{0}^{\chi }\langle U(s,z)\
e_{k},e_{j}\rangle _{2}d\beta _{k}(s),
\end{eqnarray*}%
$\forall j\in \mathbb{N}$ and $\eta \in \Psi (U)$ a.e. in $\Omega \times
(0,\ ^{T}\chi )\times \mathcal{U}.$
\end{definition}

Finally, we formulate a N--adapted generalization of the existence theorem
(Main Result of \cite{barbu1}):

\begin{theorem}
\label{theorbarbu}For each $z\in L^{p}(\mathcal{U}),p\geq \max \{2a,4\}$ and
conditions of Hypothesis \ref{hypoth}, there is a unique solution $U\in
L_{W}^{\infty }\left[ \left( 0,\ ^{T}\chi \right) ;L^{p}(\Omega ;\mathcal{U})%
\right] ,$ see Notation \ref{notati}, to (\ref{nonlindiff}). If additionally
$z$ is nonnegative a.e. in $\mathcal{U}$ \ then $\mathbb{P}$-a.s. $U(\chi
,z)(\xi )\geq 0,$ for a.e. $(\chi ,\xi )\in (0,\infty )\times \mathcal{U}.$
\end{theorem}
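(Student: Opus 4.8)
The plan is to transfer the entire machinery of the main result in \cite{barbu1} to our nonholonomic setting, verifying at each stage that the geometric data $\left( \mathbf{N},\mathbf{g},\widehat{\mathbf{D}}\right) $ enter only through the operator $\widehat{\Delta }$ and the $L^{2}(\mathcal{U})$ inner product, both of which behave exactly as their Euclidean counterparts once Hypothesis \ref{hypoth} is in force. First I would fix a bounded domain $\mathcal{U}\subset \mathbf{V}$ with $\dim \mathbf{V}\leq 3$, and record that by part (1) of the Hypothesis the N--adapted operator $-\widehat{\Delta }$ is self--adjoint and positive on $H^{2}(\mathcal{U})\cap H_{0}^{1}(\mathcal{U})$ with Dirichlet conditions, so it admits a complete orthonormal eigenbasis $\{e_{k}\}$ in $L^{2}(\mathcal{U})$ with eigenvalues $\lambda _{k}$. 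This is what makes the distribution space $\mathcal{H}=H^{-1}(\mathcal{U})$ of Notation \ref{notati} a genuine Hilbert space with the stated $\langle \cdot ,\cdot \rangle $ and $|\cdot |_{-1}$, and it is where the v--splitting is invisible: the eigenfunction expansion replaces all the flat--space Fourier analysis.

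Next I would perform the standard Yosida/approximation scheme. One regularizes the maximal monotone graph $\Psi $ by its Yosida approximation $\Psi _{\varepsilon }$, which is single--valued and Lipschitz, so that the approximate equation $\delta U_{\varepsilon }-\widehat{\Delta }\Psi _{\varepsilon }(U_{\varepsilon })\,\delta \chi =\sigma (U_{\varepsilon })\,\delta \mathcal{W}$ becomes a stochastic evolution equation in $\mathcal{H}$ with a Lipschitz drift; here the Lipschitz continuity of $\sigma $ from $H^{-1}(\mathcal{U})$ into the Hilbert--Schmidt operators, part (1) of Notation \ref{notati}, guarantees that a unique strong solution $U_{\varepsilon }$ exists by the classical infinite--dimensional theory. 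I would then derive a priori estimates uniform in $\varepsilon $: applying It\^o's formula to $|U_{\varepsilon }(\chi )|_{-1}^{2}$ and to suitable $L^{p}$ functionals, and using the polynomial growth bound in part (2) of the Hypothesis, $\sup \{|\theta |:\theta \in \Psi (r)\}\leq C(1+|r|^{a})$, one controls $U_{\varepsilon }$ in $L^{p}\left( \Omega \times (0,\ ^{T}\chi )\times \mathcal{U}\right) $ and $\eta _{\varepsilon }=\Psi _{\varepsilon }(U_{\varepsilon })$ in $L^{p/a}$. The trace--class condition in part (3), $\sum _{k}\nu _{k}^{2}\lambda _{k}^{2}<+\infty $, is exactly what makes the stochastic integral against $\delta \mathcal{W}$ well--defined and bounded in these norms, so the noise contributes a finite correction to every energy estimate.

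The main obstacle will be the passage to the limit $\varepsilon \to 0$, because the nonlinearity $\Psi $ is only maximal monotone and (possibly) multivalued, so one cannot identify the weak limit $\eta $ of $\eta _{\varepsilon }$ with an element of $\Psi (U)$ by naive continuity. The standard remedy is a monotonicity argument: extract weakly convergent subsequences $U_{\varepsilon }\rightharpoonup U$ and $\eta _{\varepsilon }\rightharpoonup \eta $ from the uniform bounds, show that the pair solves the limit equation in the integrated sense of Definition 3.1, and then invoke the strong--weak closedness of the graph of the realization of $\Psi $ on $L^{p/a}$ to conclude $\eta \in \Psi (U)$ a.e. in $\Omega \times (0,\ ^{T}\chi )\times \mathcal{U}$. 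Crucially, this closedness step requires an a.s. strong convergence in $\mathcal{H}$, which I would obtain by an It\^o estimate on $|U_{\varepsilon }-U_{\varepsilon ^{\prime }}|_{-1}^{2}$ showing $\{U_{\varepsilon }\}$ is Cauchy; the monotonicity of $\Psi $ makes the cross term have the right sign, and the Lipschitz property of $\sigma $ absorbs the noise via a Gronwall argument. Uniqueness of the limit follows from the same monotonicity--plus--Gronwall estimate applied to two solutions.

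Finally I would establish positivity. Taking nonnegative initial data $z\geq 0$ a.e.\ in $\mathcal{U}$, I would test the approximate equation against the negative part $U_{\varepsilon }^{-}$ (or, more carefully, apply It\^o's formula to a smooth convex approximation of $u\mapsto (u^{-})^{2}$), using that $0\in \Psi (0)$ from part (1) of the Hypothesis so that the monotone nonlinearity cannot create negativity, and that the multiplicative structure $\sigma (U)=\sum _{k}\nu _{k}U\langle l,e_{k}\rangle _{2}e_{k}$ in \eqref{randforc} vanishes where $U$ vanishes and hence respects the sign. Passing to the limit preserves $U(\chi ,z)(\xi )\geq 0$ a.e., giving the stated $\mathbb{P}$--a.s.\ nonnegativity. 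The whole argument is the N--adapted transcription of \cite{barbu1}; the only genuinely new verification is that every Euclidean integration by parts is replaced by one against $\sqrt{|\mathbf{g}(\xi )|}\,\delta \xi $ with $\widehat{\Delta }$ self--adjoint for this weight, which is precisely guaranteed by the metric compatibility $\widehat{\mathbf{D}}\mathbf{g}=0$ built into the canonical d--connection.
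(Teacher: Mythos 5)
Your proposal follows essentially the same route as the paper: the paper's own proof is little more than a pointer to the main result of \cite{barbu1}, stating that one must ``roll'' the flat-space construction onto atlas charts, duplicate it for the $h$-- and $v$--components, and use the Laplace--Beltrami operator determined by the canonical d--connection; all technical details are explicitly omitted. What you have done is reconstruct, correctly, the flat-space machinery being invoked --- Yosida regularization of the maximal monotone graph $\Psi$, It\^o/energy estimates in $H^{-1}(\mathcal{U})$ and $L^{p}$, passage to the limit by monotonicity and strong--weak closedness of the graph, uniqueness by monotonicity plus Gronwall, and positivity via $0\in \Psi (0)$ together with the multiplicative structure of the noise (\ref{randforc}) --- so in substance your write-up contains strictly more than the paper's.

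One point, however, is a genuine flaw in your justification. Your closing claim, that self-adjointness of $\widehat{\Delta }$ with respect to the weight $\sqrt{|\mathbf{g}(\xi )|}\,\delta \xi $ is ``precisely guaranteed by the metric compatibility $\widehat{\mathbf{D}}\mathbf{g}=0$,'' is not correct as stated. The canonical d--connection carries nontrivial torsion (the mixed $h$--$v$ components), so its scalar Laplacian differs from the metric (Levi--Civita) Laplacian by a first-order drift term involving the trace of the distortion $\widehat{\mathbf{Z}}$ in (\ref{distcon}); metric compatibility alone does not make such an operator symmetric in $L^{2}\left( \mathcal{U},\sqrt{|\mathbf{g}|}\,\delta \xi \right) $, and without symmetry the whole $\mathcal{H}=H^{-1}(\mathcal{U})$ duality framework (eigenbasis, the inner product built from $\mathbf{A}^{-1}=(-\widehat{\Delta })^{-1}$) collapses. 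This is exactly the property the paper does not derive either: it is \emph{postulated}, through Hypothesis \ref{hypoth} (in particular part (3), existence of Dirichlet eigenvalues $\lambda _{k}$ of $-\widehat{\Delta }$ for a suitable choice of nonholonomic distribution) and Notation \ref{notati} (4). Your argument therefore goes through provided you cite that step as an assumption of the theorem rather than deduce it from $\widehat{\mathbf{D}}\mathbf{g}=0$; as written, the deduction would fail for a generic torsionful d--connection.
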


Proof is similar to that provided for flat spaces. We have to ''roll'' on
atlas carts and dub the constructions for $h$-- and $v$--components and
using the Laplace--Beltrami operator for $\widehat{\Delta }$ determined by
the canonical d--connection $\widehat{\mathbf{D}}.$ We omit such technical
results. The most important consequence of this Theorem is that the
existence of a unique solution for such diffusion equations can be related to  stochastic Ricci
flows. This allows us to derive in a unique form the evolution
equations and related (stochastic) fundamental functionals.

\section{Stochastic Nonholonomic Ricci Flows}

\label{secsnrf} For a general random generating function $\phi (\chi )$
introduced into a d--metric (\ref{genans}), it is not clear how to define
the Perelman functionals and derive the Hamilton evolution equations for
stochastic Ricci flows. The goal of this section is to sketch in brief a
self--consistent stochastic version of Hamilton--Perelman theory for
nonlinear diffusion of metric coefficients when the assumptions of
Hypothesis \ref{hypoth} and conditions of Theorem \ref{theorbarbu} are
satisfied for certain generating/normalizing functions.

\subsection{A stochastic modification of Perelman's functionals}

The Perelman's functionals were introduced for Ricci flows of Riemannian
metrics and Levi--Civita connection \cite{gper1,gper2,gper3} are written in
the form
\begin{eqnarray}
\ _{\shortmid }\mathcal{F}(\mathbf{g},f) &=&\int\limits_{\mathbf{V}}\left( \
_{\shortmid }R+\left| \nabla f\right| ^{2}\right) e^{-f}\ dV,  \label{pfrs}
\\
\ _{\shortmid }\mathcal{W}(\mathbf{g},f,\tau ) &=&\int\limits_{\mathbf{V}}\left[ \tau
\left( \ _{\shortmid }R+\left| \nabla f\right| \right) ^{2}+f-2n\right] \mu
\ dV,  \notag
\end{eqnarray}%
where $dV(\xi )=\sqrt{|\mathbf{g}(\xi )|}\delta \xi $ is the volume form,
integration is taken over compact $\mathbf{V}$ and $\ _{\shortmid }R$ is the
scalar curvature computed for $\nabla .$ For a real evolution parameter $\tau
>0,$ it is considered $\int\nolimits_{\mathbf{V}}\mu dV=1$ when $\mu =\left(
4\pi \tau \right) ^{-n}e^{-f}.$ The functional approach can be redefined for
N--anholonomic manifolds with stochastic generating functions (in this \
section, we shall follow the N--adapted geometric formalism elaborated in
Refs. \cite{vricci1,vricci3}):

\begin{claim}
\textbf{--Definition:} For nonholonomic manifolds of even dimension $2n$ with stochastically generated geometric
objects, the stochastic Perelman's functionals for the
canonical d--connection $\widehat{\mathbf{D}}$ are
defined%
\begin{eqnarray}
\widehat{\mathcal{F}}(\mathbf{g},\widehat{f}) &=&\int\limits_{\mathbf{V}}\left(
R+S+\left| \widehat{\mathbf{D}}\widehat{f}\right| ^{2}\right) e^{-\widehat{f}%
}\ dV,  \label{npf1} \\
\widehat{\mathcal{W}}(\mathbf{g},\widehat{f},\tau )
&=&\int\limits_{\mathbf{V}}\left[
\widehat{\tau }\left( R+S+\left| ^{h}D\widehat{f}\right| +\left| ^{v}D%
\widehat{f}\right| \right) ^{2}+\widehat{f}-2n\right] \widehat{\mu }\ dV,
\label{npf2}
\end{eqnarray}%
where $dV$ is the volume form of $\ ^{L}\mathbf{g,}$ $R$ and $S$ are
respectively the h- and v--components of the curvature scalar of $\ \widehat{%
\mathbf{D}},$ for $\ \widehat{\mathbf{D}}_{\alpha }=(D_{i},D_{a}),$ or $%
\widehat{\mathbf{D}}=(\ ^{h}D,\ ^{v}D),$ $\left| \widehat{\mathbf{D}}%
\widehat{f}\right| ^{2}=\left| ^{h}D\widehat{f}\right| ^{2}+\left| ^{v}D%
\widehat{f}\right| ^{2},$ and $\widehat{f}$ satisfies $\int\nolimits_{%
\mathbf{V}}\widehat{\mu }dV=1$ for $\widehat{\mu }=\left( 4\pi \tau \right)^{-n}e^{-\widehat{f}}$ and $\tau >0.$
\end{claim}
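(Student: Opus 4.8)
The plan is to prove this claim--definition in two stages: first showing that the N--adapted functionals (\ref{npf1}) and (\ref{npf2}) are the correct deterministic lifts of Perelman's functionals (\ref{pfrs}) to the nonholonomic setting, and then verifying that inserting stochastically generated geometric objects yields $\mathbb{P}$--a.s.\ well--defined random variables. For the deterministic reduction I would start from (\ref{pfrs}) written for the pair $(\mathbf{g},f)$ with the Levi--Civita connection $\nabla$ and apply the distortion relation (\ref{distcon}), $\widehat{\mathbf{D}}=\nabla +\widehat{\mathbf{Z}}$. Under the N--connection splitting $T\mathbf{V}=h\mathbf{V}\oplus v\mathbf{V}$ the scalar curvature of $\nabla$ decomposes into the h--component $R$ and the v--component $S$ of the curvature scalar of $\widehat{\mathbf{D}}$, up to distortion terms expressible through $\widehat{\mathbf{Z}}$; imposing the constraints (\ref{lcconstr}) (equivalently (\ref{lcconstr1})) makes these distortion contributions vanish, so that the scalar curvature of $\nabla$ reduces to $R+S$. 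Simultaneously the gradient term splits as $|\widehat{\mathbf{D}}\widehat{f}|^{2}=|{}^{h}D\widehat{f}|^{2}+|{}^{v}D\widehat{f}|^{2}$, reproducing exactly (\ref{npf1}) and (\ref{npf2}) and matching the N--adapted construction of \cite{vricci1,vricci3}.

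For the stochastic part I would invoke Theorem \ref{theorbarbu}. The generating function $\phi(\chi)$ entering the d--metric (\ref{genans}) through (\ref{auxphi}) is taken to be a random field solving the diffusion equation (\ref{nonlindiff}); by Theorem \ref{theorbarbu} the associated process $\widehat{U}$ lies in $L^{p}\left( \Omega \times (0,\ ^{T}\chi )\times \mathcal{U}\right)$ with $p\geq \max\{2a,4\}$ and is $\mathcal{F}_{\chi}$--adapted. Through the algebraic relation (\ref{4ep2a}) between $\phi^{\ast}$ and the coefficients $h_{3},h_{4}$, and through $\widehat{f}=\widehat{f}(\widehat{U})$, the scalars $R,S$ and the weight $e^{-\widehat{f}}$ become measurable functionals of $\widehat{U}$. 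The key steps are then: (i) to check that the polynomial growth bound in item~2 of Hypothesis \ref{hypoth} propagates to the integrands of (\ref{npf1})--(\ref{npf2}); (ii) to conclude, using the $L^{p}$ integrability of $\widehat{U}$ and the boundedness of $\mathcal{U}$, that both integrals are $\mathbb{P}$--a.s.\ finite and $\mathcal{F}_{\chi}$--measurable; and (iii) to fix the normalization $\int_{\mathbf{V}}\widehat{\mu}\,dV=1$ pathwise, which determines $\widehat{f}$ for each $\omega\in\Omega$ once $\tau>0$ is chosen.

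The hard part will be step (i): establishing integrability of $R$ and $S$ against the random weight $e^{-\widehat{f}}$. Curvature involves second--order derivatives of the metric coefficients, hence, through (\ref{auxphi}) and (\ref{4ep2a}), second derivatives of $\phi$, whereas Theorem \ref{theorbarbu} controls $\widehat{U}$ only in $L^{p}$. I would address this by exploiting that the domain of $\widehat{\Delta}$ is $H^{2}(\mathcal{U})\cap H_{0}^{1}(\mathcal{U})$ (item~1 of Hypothesis \ref{hypoth}), so that the solution gains the extra Sobolev regularity needed to make $R,S$ square--integrable, and by using the exponential factor together with the Dirichlet condition $\Psi(U)\ni 0$ on $\partial\mathcal{U}$ to control the behaviour near the boundary. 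The remaining verifications --- adaptedness to $\{\mathcal{F}_{\chi}\}_{\chi\geq 0}$ and reduction to (\ref{pfrs}) in the holonomic Riemannian limit $\widehat{\mathbf{D}}\to\nabla$, $S\to 0$ --- are routine and can be dubbed on the atlas carts for the $h$-- and $v$--components exactly as in the proof of Theorem \ref{theorbarbu}.
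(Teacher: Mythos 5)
There is a genuine gap in your deterministic reduction (your stage 1), and it concerns the central idea of the paper's own proof. You propose to pass from (\ref{pfrs}) to (\ref{npf1})--(\ref{npf2}) by applying the distortion relation (\ref{distcon}) and then \emph{imposing} the constraints (\ref{lcconstr}) (equivalently (\ref{lcconstr1})) so that the distortion contributions vanish. But those constraints force $\widehat{\mathbf{Z}}=0$, i.e. they are exactly the conditions under which the canonical d--connection coincides with the Levi--Civita connection; in that degenerate case the identity $\ _{\shortmid }R=R+S$ and the coincidence of the functionals (with $\widehat{f}=f$, $\widehat{\tau }=\tau $) are trivially true but empty. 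The Claim--Definition, however, must be valid for \emph{general} nonholonomic configurations with nontrivial d--torsion: that is precisely the regime in which Theorems \ref{firstperth} and \ref{theveq} and the thermodynamic theorem later use $\widehat{\mathcal{F}}$ and $\widehat{\mathcal{W}}$, the paper treating (\ref{lcconstr1}) as an \emph{optional} restriction used only when one wants to extract Levi--Civita configurations. The paper's proof handles the general case by a different mechanism: it does not kill the distortion at all, but absorbs it into a (possibly non--explicit, random) redefinition of the dilaton, $f\leftrightarrow \widehat{f}$, together with a rescaling $\tau \rightarrow \widehat{\tau }$, so that the two integrands agree up to terms $q$ and $q_{1}$ satisfying $\int_{\mathbf{V}}q\,dV=0$ and $\int_{\mathbf{V}}q_{1}\,dV=0$. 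This ``absorb the difference into $\widehat{f}$, $\widehat{\tau }$ modulo zero--integral terms'' step is the entire content of the paper's argument, and it is absent from your proposal; without it you have only established the definition in the case where there is nothing nonholonomic left to define.

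On the stochastic side, your stage 2 is actually more careful than the paper: the paper merely remarks that the relation between $f$ and $\widehat{f}$ may involve random functions and gravitational ``noise'' induced by the stochastic N--adapted $R+S$, and elsewhere that stochastic terms enter via expectation values, whereas you attempt to verify adaptedness and integrability through Theorem \ref{theorbarbu}, Hypothesis \ref{hypoth} and $H^{2}$ regularity. Your worry that curvature needs two derivatives of $\phi $ while Theorem \ref{theorbarbu} only controls $L^{p}$ norms is a legitimate issue the paper never confronts. But this added rigor cannot compensate for the missing general--case equivalence in stage 1; as written, your proof establishes the claim only on the Levi--Civita locus, not for the stochastic nonholonomic functionals the rest of the paper relies on.
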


\begin{proof}
The formulas (\ref{pfrs}) are redefined for some $\widehat{f}$ and $f$ \
(which can be a non--explicit relation between random functions etc and with
gravitational ''noise'' induced by stochastic N--adapted $R+S$) when
\begin{equation*}
\left( \ _{\shortmid }R+\left| \nabla f\right| ^{2}\right) e^{-f}=\left(
R+S+\left| ^{h}D\widehat{f}\right| ^{2}+\left| ^{v}D\widehat{f}\right|
^{2}\right) e^{-\widehat{f}}\ +q.
\end{equation*}%
Re--scaling the evolution parameter $\tau $ (it is similar to the considered
above $\chi $),  $\tau \rightarrow \widehat{\tau },$ we have
{\small
\begin{equation*}
\left[ \tau \left( \ _{\shortmid }R+\left| \nabla f\right| \right) ^{2}+f-2n)%
\right] \mu =\left[ \widehat{\tau }\left( R+S+\left| ^{h}D\widehat{f}\right|
+\left| ^{v}D\widehat{f}\right| \right) ^{2}+\widehat{f}-2n\right] \widehat{%
\mu }+q_{1}
\end{equation*}
}%
for some $q$ and $q_{1}$ for which $\int\limits_{\mathbf{V}}qdV=0$ and $%
\int\limits_{\mathbf{V}}q_{1}dV=0.$ $\square $
\end{proof}

The geometric objects defining functionals $\widehat{\mathcal{F}}$ and $%
\widehat{\mathcal{W}}$ are with some components computed as expectation
values  (see details in \cite{vpart1}, for instance,
formulas (12) - (14) when the probability density function subjected to
Focker--Plank conditions, is used for computing such values). We consider
the h--variation $^{h}\delta g_{ij}=v_{ij},$ the v--variation $^{v}\delta
g_{ab}=v_{ab},$ and $^{h}\delta \widehat{f}=\ ^{h}f,$ $^{v}\delta \widehat{f}%
=\ ^{v}f$

An explicit calculus for the first N--adapted variations of (\ref{npf1}),
see similar details in \cite{vricci1,vricci3}, distinguished into sure and
random components, can be represented in the form
\begin{eqnarray}
&&\delta \widehat{\mathcal{F}}(v_{ij},v_{ab},\ ^{h}f,\ ^{v}f)=  \label{vnpf1}
\\
&&\int\limits_{\mathbf{V}}\{[-v_{ij}(R_{ij}+D_{i}D_{j}\widehat{f})+(\frac{\
^{h}v}{2}-\ ^{h}f)\left( 2\ ^{h}\Delta \widehat{f}-|\ ^{h}D\ \widehat{f}%
|\right) +R]  \notag \\
&&+[-v_{ab}(R_{ab}+D_{a}D_{b}\widehat{f})+(\frac{\ ^{v}v}{2}-\ ^{v}f)\left(
2\ ^{v}\Delta \widehat{f}-|\ ^{v}D\ \widehat{f}|\right) +S]\}e^{-\widehat{f}%
}dV,  \notag
\end{eqnarray}%
where $^{h}\Delta =D_{i}D^{i}$ and $^{v}\Delta =D_{a}D^{a},\widehat{\Delta }%
= $ $\ ^{h}\Delta +\ ^{v}\Delta ,$ and $\ ^{h}v=g^{ij}v_{ij},\
^{v}v=h^{ab}v_{ab}.$

\subsection{Main Theorems for stochastic Ricci flow  equations}

We shall prove that stochastic equations of type (\ref{rf1}) can be derived
from the Perelman's N--adapted functionals (\ref{npf1}) and (\ref{npf2})
(for simplicity, we shall not consider the normalized term and put $\lambda
=0).$

\begin{definition}
A metric $\ \mathbf{g}$ generated by a stochastic generating function $\phi
(\chi )$ is called a (nonholonomic) stochastic breather if for some $\chi
_{1}<\chi _{2}$ and $\alpha >0$ the metrics $\alpha \ \mathbf{g(}\chi _{1}%
\mathbf{)}$ and $\alpha \ \mathbf{g(}\chi _{2}\mathbf{)}$ differ only by a \
N--adapted diffeomorphism. The cases $\alpha =,<,>1$ define correspondingly
the steady, shrinking and expanding breathers (there are configurations
when, for instance, the h--component of metric is steady but the
v--component is shrinking).
\end{definition}

Clearly, the breather properties, in sure and stochastic variables, depend
on the type of connections which are used for definition of Ricci flows. We
can elaborate a unique nonlinear diffusion scenarios for $\widehat{\mathbf{D}%
}$ for the assumptions of Hypothesis \ref{hypoth}.

Following a N--adapted variational calculus for $\widehat{\mathcal{F}}(\mathbf{g},%
\widehat{f}),$ with formula (\ref{vnpf1}), Laplacian $\widehat{\Delta }$ and
h- and v--components of the Ricci tensor, $\widehat{R}_{ij}$ and $\widehat{S}%
_{ij},$ and considering parameter $\tau (\chi ),$ $\partial \tau /\partial
\chi =-1,$ we formulate

\begin{theorem} \label{firstperth}
The stochastic nonholonomic Ricci flows can be
parametrized by corresponding nonholonomic distributions and characterized by evolution equations
\begin{eqnarray}
\frac{\partial g_{ij}}{\partial \chi } &=&-2\widehat{R}_{ij},\ \frac{%
\partial h_{ab}}{\partial \chi }=-2\widehat{R}_{ab},  \notag \\
\ \frac{\partial \widehat{f}}{\partial \chi } &=&-\widehat{\Delta }\widehat{f%
}+\left| \widehat{\mathbf{D}}\widehat{f}\right| ^{2}-R-S  \label{scaleq}
\end{eqnarray}%
and the property that
\begin{equation*}
\frac{\partial }{\partial \chi }\widehat{\mathcal{F}}(\ \mathbf{g(\chi ),}%
\widehat{f}(\chi ))=2\int\limits_{\mathbf{V}}\left[ |\widehat{R}%
_{ij}+D_{i}D_{j}\widehat{f}|^{2}+|\widehat{R}_{ab}+D_{a}D_{b}\widehat{f}|^{2}%
\right] e^{-\widehat{f}}dV,
\end{equation*}%
$\int\limits_{\mathbf{V}}e^{-\widehat{f}}dV$ is constant and the geometric
objects \ $\widehat{\mathbf{D}},\widehat{\Delta },\widehat{R}_{ij},\widehat{R%
}_{ab},R$ and $\ S$ are induced from random generating functions in \ $%
\mathbf{g(}\chi )$.
\end{theorem}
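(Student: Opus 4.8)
The plan follows the classical Perelman monotonicity computation, adapted to the N-connection splitting and the canonical d-connection $\widehat{\mathbf{D}}$. I would begin from the first-variation formula (\ref{vnpf1}) already derived in the previous subsection. The strategy is: first derive the evolution equations as the natural gradient flow of $\widehat{\mathcal{F}}$, then verify the monotonicity formula by substituting those flow equations back into a time-differentiated version of the functional.

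First I would record that the functional $\widehat{\mathcal{F}}$ is, by (\ref{vnpf1}), invariant under the usual exponential change of measure: imposing the constraint $\int_{\mathbf{V}} e^{-\widehat f}\,dV = \mathrm{const}$ forces the trace terms to combine so that the effective gradient is $\widehat R_{ij}+D_iD_j\widehat f$ in the h-part and $\widehat R_{ab}+D_aD_b\widehat f$ in the v-part. Concretely, I would set the variations $v_{ij} = -2(\widehat R_{ij}+D_iD_j\widehat f)$ and $v_{ab} = -2(\widehat R_{ab}+D_aD_b\widehat f)$, and couple $\widehat f$ through $\partial_\chi \widehat f = -\widehat\Delta\widehat f + |\widehat{\mathbf D}\widehat f|^2 - R - S$ precisely so that the measure constraint is preserved. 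The $\widehat f$-evolution equation in (\ref{scaleq}) is exactly the backward heat-type equation whose role is to cancel the undifferentiated divergence terms; this is where I would invoke the sure analog equation (\ref{analogstoch}) and the existence/uniqueness guaranteed by Theorem \ref{theorbarbu}, so that the stochastic generating function $\phi(\chi)$ produces a well-defined $\widehat f$.

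Next I would differentiate $\widehat{\mathcal{F}}(\mathbf g(\chi),\widehat f(\chi))$ in $\chi$ along these flows. Plugging the chosen $v_{ij},v_{ab},\ ^hf,\ ^vf$ into (\ref{vnpf1}), the trace and Laplacian terms telescope (using $^h\Delta+{}^v\Delta=\widehat\Delta$ and integration by parts with the vanishing-trace boundary conditions on $\partial\mathcal U$ from Notation \ref{notati}), leaving only the perfect-square combination
\begin{equation*}
\frac{\partial}{\partial\chi}\widehat{\mathcal F} = 2\int_{\mathbf V}\left[|\widehat R_{ij}+D_iD_j\widehat f|^2 + |\widehat R_{ab}+D_aD_b\widehat f|^2\right]e^{-\widehat f}\,dV,
\end{equation*}
which is manifestly nonnegative and matches the claimed formula. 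The constancy of $\int_{\mathbf V}e^{-\widehat f}\,dV$ follows from the $\widehat f$-equation by a direct computation: $\partial_\chi(e^{-\widehat f}dV)$ integrates to zero because $\partial_\chi(dV) = -(R+S)\,dV$ under $\partial_\chi g=-2\widehat R$, and the remaining terms form a total $\widehat\Delta$-divergence.

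The main obstacle is the h-v splitting bookkeeping together with the non-symmetry and torsion of $\widehat{\mathbf D}$: unlike the Levi--Civita case, the Ricci d-tensor is not symmetric and the integration-by-parts identities acquire extra distortion contributions from $\widehat{\mathbf Z}$ in (\ref{distcon}). I would handle this by exploiting the constraints $\widehat R_{ia}=0=\widehat R_{ai}$ from (\ref{e3})/(\ref{rf1}), which decouple the h- and v-sectors and kill precisely the mixed terms that would otherwise spoil the perfect-square structure; the stochastic aspect is then controlled separately, since the random parts enter only through the generating function and the requirement $\int_{\mathbf V}q\,dV=\int_{\mathbf V}q_1\,dV=0$ established in the Claim-Definition, so the expectation-value geometry reduces to the sure computation N-adaptedly ''rolled'' over the atlas as in the proof of Theorem \ref{theorbarbu}.
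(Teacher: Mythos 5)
Your overall architecture coincides with the paper's own (sketched) proof: the paper likewise (i) repeats Perelman's first--variation/monotonicity computation (Proposition 1.5.3 of \cite{caozhu}) with $\nabla$ replaced by the canonical d--connection $\widehat{\mathbf{D}}$ and with random components treated as expectation values, and (ii) identifies the genuinely new ingredient as well--posedness of the $\widehat{f}$--equation (\ref{scaleq}), obtained by matching it with (\ref{analogstoch}) and the associated stochastic system (\ref{nonlindiff}) so that Theorem \ref{theorbarbu} applies. You capture both points, and your handling of the stochastic input is exactly the paper's.

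There is, however, a concrete slip in your variational prescription that would make the central computation fail as written: you substitute $v_{ij}=-2(\widehat{R}_{ij}+D_iD_j\widehat{f})$, $v_{ab}=-2(\widehat{R}_{ab}+D_aD_b\widehat{f})$ into (\ref{vnpf1}) while simultaneously evolving $\widehat{f}$ by $\partial_\chi\widehat{f}=-\widehat{\Delta}\widehat{f}+|\widehat{\mathbf{D}}\widehat{f}|^2-R-S$. These two choices belong to different gauges of Perelman's argument and cannot be paired. The variation $v_{\alpha\beta}=-2(\widehat{R}_{\alpha\beta}+D_\alpha D_\beta\widehat{f})$ is the gradient (``modified'') flow, along which the measure $e^{-\widehat{f}}dV$ is preserved only if $\partial_\chi\widehat{f}=-\widehat{\Delta}\widehat{f}-R-S$, \emph{without} the $|\widehat{\mathbf{D}}\widehat{f}|^2$ term; with your pairing one gets $\partial_\chi\int_{\mathbf{V}}e^{-\widehat{f}}dV=-\int_{\mathbf{V}}|\widehat{\mathbf{D}}\widehat{f}|^2e^{-\widehat{f}}dV<0$, the factor $(\tfrac{1}{2}{}^hv-{}^hf)+(\tfrac{1}{2}{}^vv-{}^vf)$ in (\ref{vnpf1}) reduces to $-|\widehat{\mathbf{D}}\widehat{f}|^2$ rather than telescoping away, and the perfect square does not emerge. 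To prove the theorem as stated, whose flow equations are $\partial_\chi g_{ij}=-2\widehat{R}_{ij}$, $\partial_\chi h_{ab}=-2\widehat{R}_{ab}$, you must do one of two things: (a) substitute $v_{ij}=-2\widehat{R}_{ij}$, $v_{ab}=-2\widehat{R}_{ab}$ directly, in which case the residual factor is ${}^h\Delta\widehat{f}+{}^v\Delta\widehat{f}-|\widehat{\mathbf{D}}\widehat{f}|^2$ and the square is completed only after integration by parts against $e^{-\widehat{f}}$ together with contracted Bianchi--type identities for $\widehat{\mathbf{D}}$ (where the distortion (\ref{distcon}) and the constraints $\widehat{R}_{ia}=\widehat{R}_{ai}=0$ must be invoked, as you anticipate); or (b) run the gradient flow with the measure--preserving equation $\partial_\chi\widehat{f}=-\widehat{\Delta}\widehat{f}-R-S$ and then pull back by the one--parameter family of N--adapted diffeomorphisms generated by $\widehat{\mathbf{D}}\widehat{f}$, using diffeomorphism invariance of $\widehat{\mathcal{F}}$; the $|\widehat{\mathbf{D}}\widehat{f}|^2$ term in (\ref{scaleq}) is produced precisely by this pullback. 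Either repair is standard, and the remainder of your argument, including the appeal to Theorem \ref{theorbarbu} for existence and uniqueness of the stochastic $\widehat{f}$, then goes through as in the paper.
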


\begin{proof}
We sketch the idea of such a proof: We should follow G. Perelman \cite{gper1}
constructions (details are given for the connection $\nabla $ in the
Proposition 1.5.3 of \cite{caozhu}) but for the canonical d--connection $%
\widehat{\mathbf{D}}$). \ The random components of metrics and connections
are included as mathematical expectations. The most important task is to
prove that for stochastic generating functions the equation (\ref{scaleq})
has a unique solution for well defined conditions. \ Really, this equation is
equivalent to (\ref{analogstoch}) for with a stochastic system (\ref%
{nonlindiff}) can be associated. The existence of a unique solution follows
from Theorem \ref{theorbarbu}. $\square $
\end{proof}

We can also derive stochastic evolution equations, following stochastic
N--adapted modifications of Proposition 1.5.8 in \cite{caozhu} containing
the details of the original result from \cite{gper1}:

\begin{theorem}
\label{theveq}If a d--metric $\ \mathbf{g}(\chi )$ determined by stochastic
generating function $\phi (\chi )$ (\ref{auxphi}) and functions $\widehat{f}%
(\chi )$ and $\widehat{\tau }(\chi )$ evolve stochastically following the
equations (a stochastic nonholonomic generalization of Hamilton's equations)
\begin{eqnarray}
\frac{\partial g_{ij}}{\partial \chi } &=&-2\widehat{R}_{ij},\ \frac{%
\partial h_{ab}}{\partial \chi }=-2\widehat{R}_{ab},  \notag \\
\ \frac{\partial \widehat{f}}{\partial \chi } &=&-\widehat{\Delta }\widehat{f%
}+\left| \widehat{\mathbf{D}}\widehat{f}\right| ^{2}-R-S+\frac{n}{\widehat{%
\tau }},  \label{auxpar1} \\
\frac{\partial \widehat{\tau }}{\partial \chi } &=&-1  \label{auxpr2}
\end{eqnarray}%
and the property that that the stochastic Ricci flow \ evolution is
constrained to satisfy, for $\int\limits_{\mathbf{V}}(4\pi \widehat{\tau }%
)^{-n}e^{-\widehat{f}}dV=const,$  the conditions
\begin{eqnarray*}
&&\frac{\partial }{\partial \chi }\widehat{\mathcal{W}}(\ \mathbf{g}(\chi )%
\mathbf{,}\widehat{f}(\chi ),\widehat{\tau }(\chi ))= \\
&&2\int\limits_{\mathbf{V}%
}\widehat{\tau }[|\widehat{R}_{ij}+D_{i}D_{j}\widehat{f}-\frac{1}{2\widehat{%
\tau }}g_{ij}|^{2}+|\widehat{R}_{ab}+D_{a}D_{b}\widehat{f}-\frac{1}{2%
\widehat{\tau }}g_{ab}|^{2}](4\pi \widehat{\tau })^{-n}e^{-\widehat{f}}dV.
\end{eqnarray*}
\end{theorem}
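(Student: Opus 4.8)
The plan is to follow Perelman's monotonicity argument for the $\mathcal{W}$-entropy functional, adapted to the canonical d--connection $\widehat{\mathbf{D}}$ and N--adapted to the h-- and v--splitting, in direct analogy with the calculation that proves Theorem \ref{firstperth} for $\widehat{\mathcal{F}}$. First I would compute the $\chi$--derivative of $\widehat{\mathcal{W}}(\mathbf{g}(\chi),\widehat{f}(\chi),\widehat{\tau}(\chi))$ from (\ref{npf2}) by the chain rule, treating separately the variation coming from $\partial_\chi g_{ij}$ and $\partial_\chi h_{ab}$, the variation from $\partial_\chi \widehat{f}$, and the variation from $\partial_\chi \widehat{\tau}$. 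The first variation formula (\ref{vnpf1}) already packages the metric-- and $\widehat{f}$--variations for the $\widehat{\mathcal{F}}$--part; the analogous first--variation expression for $\widehat{\mathcal{W}}$ simply carries the extra factor $\widehat{\tau}$, the explicit $\widehat{f}-2n$ term, and the weight $\widehat{\mu}=(4\pi\widehat{\tau})^{-n}e^{-\widehat{f}}$ in place of $e^{-\widehat{f}}$.

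Second, I would substitute the prescribed evolution equations (\ref{scaleq}--like) $\partial_\chi g_{ij}=-2\widehat{R}_{ij}$, $\partial_\chi h_{ab}=-2\widehat{R}_{ab}$, the modified conjugate--heat equation (\ref{auxpar1}) for $\widehat{f}$ (note the extra $n/\widehat{\tau}$ term, which is exactly what is needed to keep $\int_{\mathbf{V}}(4\pi\widehat{\tau})^{-n}e^{-\widehat{f}}dV$ constant under (\ref{auxpr2})), and $\partial_\chi\widehat{\tau}=-1$. The constraint $\int_{\mathbf{V}}\widehat{\mu}\,dV=\mathrm{const}$ must be verified first, since it both justifies dropping the total--divergence boundary terms under N--adapted integration by parts and removes the spurious contributions from differentiating the normalization. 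The heart of the computation is then a sequence of integrations by parts with respect to $\widehat{\Delta}=\,^{h}\Delta+\,^{v}\Delta$, carried out independently in the h-- and v--sectors, organizing all terms so that they combine into the perfect squares $|\widehat{R}_{ij}+D_iD_j\widehat{f}-\frac{1}{2\widehat{\tau}}g_{ij}|^2$ and $|\widehat{R}_{ab}+D_aD_b\widehat{f}-\frac{1}{2\widehat{\tau}}g_{ab}|^2$, weighted by $2\widehat{\tau}\,\widehat{\mu}$.

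Third, because the generating function $\phi(\chi)$ is stochastic, every geometric object appearing in the functional --- $\widehat{R}_{ij}$, $\widehat{R}_{ab}$, $R$, $S$, $\widehat{\Delta}$ and the covariant Hessians $D_iD_j\widehat{f}$, $D_aD_b\widehat{f}$ --- carries a random part; I would interpret the identity in the sense of expectation values, exactly as in Theorem \ref{firstperth}, so that the random components enter as mathematical expectations (computed via the Focker--Planck density of \cite{vpart1}) and the sure part of the identity reduces to the classical Perelman calculation of Proposition 1.5.8 of \cite{caozhu}. The existence and well--posedness of the stochastic flow driving these expectations is supplied by Theorem \ref{theorbarbu}, since (\ref{auxpar1}) is of the type (\ref{analogstoch}) and hence associated with a stochastic equation (\ref{nonlindiff}) admitting a unique solution under Hypothesis \ref{hypoth}.

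The main obstacle I anticipate is not the algebra of the integration by parts --- that is a routine if lengthy adaptation of the $\nabla$--computation to $\widehat{\mathbf{D}}$ in h-- and v--blocks --- but rather the control of the non--vanishing torsion of $\widehat{\mathbf{D}}$ and the interchange of the stochastic evolution (expectation) with the nonlinear differential operators. Because $\widehat{\mathbf{D}}$ is metric compatible but has nontrivial h--v torsion $\widehat{\mathbf{T}}$, the usual commutation identities relating the Hessian, the Laplacian and the Ricci tensor acquire distortion corrections via (\ref{distcon}); one must check that, under the N--adapted constraints fixing $\partial_\chi(N_i^c)=0$ and the conditions of Hypothesis \ref{hypoth}, these correction terms either cancel in the h-- and v--sectors separately or integrate to zero against the weight $\widehat{\mu}\,dV$, so that the clean squared form survives. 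Establishing this cancellation, together with the measurability/integrability needed to differentiate under the expectation and the integral sign, is the delicate step; the remaining assembly into the stated monotonicity formula is then formal.
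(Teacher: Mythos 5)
Your proposal follows essentially the same route as the paper: the paper's own treatment of Theorem \ref{theveq} is only a brief remark that the result follows by stochastic N--adapted (h--/v--split) modification of Proposition 1.5.8 in \cite{caozhu}, with the random components entering as expectation values (via the Fokker--Planck density of \cite{vpart1}) and with existence/uniqueness for the $\widehat{f}$--equation (\ref{auxpar1}) supplied by identifying it with (\ref{analogstoch}) and hence with a system of type (\ref{nonlindiff}) through Theorem \ref{theorbarbu} --- precisely the three ingredients you describe. If anything, your plan is more careful than the paper's sketch, notably in flagging the torsion/distortion corrections of $\widehat{\mathbf{D}}$ and the interchange of expectation with the nonlinear operators, issues the paper passes over in silence.
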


The equation (\ref{auxpar1}) is similar to (\ref{analogstoch}) and can be
related to a stochastic system of type (\ref{nonlindiff}). The condition (%
\ref{auxpr2}) \ states a relation between a Ricci flow evolution parameter $%
\widehat{\tau }$ and stochastic evolution parameter $\chi .$

\subsection{Statistical analogy for stochastic Ricci flows}

The Ricci flow theory has, in its turn, a very interesting application in
the theory of stochastic equations and diffusion. It allows us to
characterize additionally such processes via associated statistical
functionals, entropy and thermodynamical values. G. Perelman emphasized \cite%
{gper1} that the functional $\ _{\shortmid }\mathcal{W}$ is in a\ sense
analogous to minus entropy. In this section, we prove that such a property
exists also for nonholonomic stochastic Ricci flows and that we can provide
a statistical model for nonlinear diffusion processes.\footnote{%
Let us remember some important concepts from statistical mechanics: The
partition function $Z=\int \exp (-\beta E)d\omega (E)$ for the canonical
ensemble at temperature $\ ^{T}\beta ^{-1}$ is defined by the measure taken
to be the density of states $\omega (E).$ The thermodynamical values are
computed in the form: the average energy, $\langle E\rangle =-\partial \log
Z/\partial \ ^{T}\beta ,$ the entropy $S=\ ^{T}\beta \langle E\rangle +\log Z
$ and the fluctuation $\sigma =\langle \left( E-\langle E\rangle \right)
^{2}\rangle =\partial ^{2}\log Z/\partial \ ^{T}\beta ^{2}.$}

We consider a evolution of stochastic geometric systems described by some
metrics $\mathbf{g}(\widehat{\tau }),$ N--connections $N_{i}^{a}(\widehat{%
\tau })$ and related canonical d--connections $\widehat{\mathbf{D}}(\widehat{%
\tau })$ when the conditions of Theorem \ref{theveq} are satisfied. It
follows:

\begin{theorem}
Any family of stochastically Ricci flow evolving nonholonomic geometries,
and solutions of Einstein equations, with nonlinear diffusion data derived
from Hypothesis \ref{hypoth} and conditions of Theorem \ref{theorbarbu} \ is
characterized by thermodynamic values
\begin{eqnarray}
&&\langle \widehat{E}\rangle  =-\widehat{\tau }^{2}\int\limits_{\mathbf{V}%
}\left( R+S+\left| ^{h}D\widehat{f}\right| ^{2}+\left| ^{v}D\widehat{f}%
\right| ^{2}-\frac{n}{\widehat{\tau }}\right) \widehat{\mu }\ dV,
\label{thermodv} \\
&&\widehat{S} =-\int\limits_{\mathbf{V}}\left[ \widehat{\tau }\left(
R+S+\left| ^{h}D\widehat{f}\right| ^{2}+\left| ^{v}D\widehat{f}\right|
^{2}\right) +\widehat{f}-2n\right] \widehat{\mu }\ dV,  \notag \\
&&\widehat{\sigma } =2\ \widehat{\tau }^{4}\int\limits_{\mathbf{V}}\left[ |%
\widehat{R}_{ij}+D_{i}D_{j}\widehat{f}-\frac{1}{2\widehat{\tau }}%
g_{ij}|^{2}+|\widehat{R}_{ab}+D_{a}D_{b}\widehat{f}-\frac{1}{2\widehat{\tau }%
}g_{ab}|^{2}\right] \widehat{\mu }\ dV.  \notag
\end{eqnarray}
\end{theorem}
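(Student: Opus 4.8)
The plan is to import Perelman's thermodynamic analogy (the partition--function formalism recalled in the footnote of this section) into the N--adapted, h--/v--split setting of the canonical d--connection $\widehat{\mathbf{D}}$, with all random geometric objects in $\mathbf{g}(\chi)$ entering as mathematical expectations, exactly as the functionals $\widehat{\mathcal{F}}$ and $\widehat{\mathcal{W}}$ were defined in (\ref{npf1})--(\ref{npf2}). The starting point is to build a partition function from the measure $\widehat{\mu}=(4\pi\widehat{\tau})^{-n}e^{-\widehat{f}}$, which by hypothesis satisfies $\int_{\mathbf{V}}\widehat{\mu}\,dV=1$, and then to read $\langle\widehat{E}\rangle,\widehat{S},\widehat{\sigma}$ off the standard relations $\langle E\rangle=-\partial_{\beta}\log Z$, $S=\beta\langle E\rangle+\log Z$, $\sigma=\partial_{\beta}^{2}\log Z$. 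The whole computation is then a differentiation of these integral functionals along the stochastic flow, where the derivatives of $\widehat{f}$ and $\widehat{\tau}$ are supplied by the evolution equations (\ref{auxpar1})--(\ref{auxpr2}) of Theorem \ref{theveq}.

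First I would set $\log\widehat{Z}=\int_{\mathbf{V}}(-\widehat{f}+n)\,\widehat{\mu}\,dV$ (the factor $n$ being half the dimension $2n$) and identify the inverse temperature as $\beta=\widehat{\tau}^{-1}$, so that $\partial_{\beta}=-\widehat{\tau}^{2}\partial_{\widehat{\tau}}$ and, along the flow, $\partial_{\widehat{\tau}}=-\partial_{\chi}$ by (\ref{auxpr2}). Computing $\langle\widehat{E}\rangle=-\partial_{\beta}\log\widehat{Z}$ then requires differentiating $\widehat{f}$ and $\widehat{\mu}$ by means of (\ref{auxpar1}) followed by N--adapted integration by parts, carried out separately on the h-- and v--blocks with $\ ^{h}\Delta=D_iD^i$ and $\ ^{v}\Delta=D_aD^a$. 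Collecting the surviving terms $R+S+|\ ^{h}D\widehat{f}|^{2}+|\ ^{v}D\widehat{f}|^{2}$ together with the explicit $n/\widehat{\tau}$ coming from the $(4\pi\widehat{\tau})^{-n}$ normalization yields the stated expression for $\langle\widehat{E}\rangle$.

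The entropy then follows algebraically: substituting the computed $\langle\widehat{E}\rangle$ and $\log\widehat{Z}$ into $\widehat{S}=\beta\langle\widehat{E}\rangle+\log\widehat{Z}$, and using $\int_{\mathbf{V}}\widehat{\mu}\,dV=1$, collapses the expression to $-\int_{\mathbf{V}}[\widehat{\tau}(R+S+|\ ^{h}D\widehat{f}|^{2}+|\ ^{v}D\widehat{f}|^{2})+\widehat{f}-2n]\widehat{\mu}\,dV=-\widehat{\mathcal{W}}$, which reproduces Perelman's observation that $\widehat{\mathcal{W}}$ is minus entropy. For the fluctuation I would not differentiate twice by hand but instead invoke the monotonicity identity of Theorem \ref{theveq}: since $\widehat{\sigma}=\partial_{\beta}^{2}\log\widehat{Z}=-\partial_{\beta}\langle\widehat{E}\rangle$, the thermodynamic bookkeeping gives $\widehat{\sigma}=\widehat{\tau}^{3}\partial_{\widehat{\tau}}\widehat{S}=\widehat{\tau}^{3}\partial_{\chi}\widehat{\mathcal{W}}$, so the bracket $2\int_{\mathbf{V}}\widehat{\tau}[|\widehat{R}_{ij}+D_iD_j\widehat{f}-\frac{1}{2\widehat{\tau}}g_{ij}|^{2}+|\widehat{R}_{ab}+D_aD_b\widehat{f}-\frac{1}{2\widehat{\tau}}g_{ab}|^{2}]\widehat{\mu}\,dV$ supplied by Theorem \ref{theveq} is carried over directly, the extra $\widehat{\tau}^{3}$ producing the factor $2\widehat{\tau}^{4}$ in the claimed $\widehat{\sigma}$.

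The routine part is the N--adapted integration by parts, which is the exact h--/v--split analogue of the computation for $\nabla$ behind Proposition 1.5.8 of \cite{caozhu}. The genuine obstacle is the stochastic justification: one must show that the random generating function $\phi(\chi)$ enters all three functionals only through mathematical expectations, and that differentiation in $\chi$ (equivalently in $\widehat{\tau}$) commutes with these expectations and with the spatial integrals over $\mathbf{V}$. This is precisely where the existence, uniqueness and nonnegativity furnished by Theorem \ref{theorbarbu}, together with the growth bounds in Hypothesis \ref{hypoth}, become essential: they guarantee that $\widehat{f}(\chi)$ solving (\ref{auxpar1}) is the well--defined stochastic solution of an equation of type (\ref{analogstoch})--(\ref{nonlindiff}), so that the formal Perelman derivatives above are legitimate and the thermodynamic values $\langle\widehat{E}\rangle,\widehat{S},\widehat{\sigma}$ are well defined along the stochastic evolution.
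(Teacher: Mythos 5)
Your proposal is correct and follows essentially the same route as the paper's own (very brief) proof: the paper likewise takes $\widehat{Z}=\exp\{\int_{\mathbf{V}}[-\widehat{f}+n]\,\widehat{\mu}\,dV\}$, reads $\langle\widehat{E}\rangle,\widehat{S},\widehat{\sigma}$ off the statistical--mechanics relations of the section's footnote with temperature $\widehat{\tau}$ by N--adapting Perelman's computation, and treats all random geometric objects as expectation values. The only difference is cosmetic: for the stochastic bookkeeping the paper invokes expected values of coefficients satisfying the Fokker--Planck (forward Kolmogorov) equation, while you justify it via Theorem \ref{theorbarbu} and Hypothesis \ref{hypoth}, and your derivation of $\widehat{\sigma}$ through the identity $\widehat{\sigma}=\widehat{\tau}^{3}\partial_{\widehat{\tau}}\widehat{S}$ combined with the monotonicity formula of Theorem \ref{theveq} is exactly the mechanism implicit in the paper's reference to Perelman's computation.
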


\begin{proof}
It follows from a straightforward computation for \newline
$\widehat{Z}=\exp \{\int\nolimits_{\mathbf{V}}[-\widehat{f}+n]\widehat{\mu }%
dV\}$  as in the original paper \cite{gper1}. For nonholonomic stochastic
processes, we have to N--adapt the constructions as in \cite{vricci1,vricci3}%
. The stochastic terms are included in formulas (\ref{thermodv}) via
expected values of smooth coefficients satisfying the Fokker--Plank equation
(or forward Kolmogorov equation). $\square $
\end{proof}

Any N--adapted stochastic configuration determined by a canonical
d--connection $\widehat{\mathbf{D}}$ is thermodynamically more (less,
equivalent) convenient than a similar one defined by the Levi--Civita
connection $\nabla $ if $\ \widehat{S}<\ _{\shortmid }S$ ($\widehat{S}>\
_{\shortmid }S,\widehat{S}=\ _{\shortmid }S$). Similarly, a certain geometry
can be more (less, equivalent) convenient than a stochastic one and related
Ricci flow evolution models.

\end{document}